\documentclass[conference]{IEEEtran}
\usepackage{comment}
\usepackage{multicol}
\usepackage{blindtext}
\usepackage{amssymb}
\usepackage{subfigure}
\usepackage{amsmath}
\usepackage{amsfonts}
\usepackage{mathrsfs}
\usepackage{engord}
\usepackage[dvips]{graphicx}
\usepackage{bbm}
\usepackage{threeparttable}
\usepackage{booktabs}
\usepackage{epsfig}
\usepackage{color}
\usepackage{graphicx}
\usepackage{multirow}
\usepackage{cite}
\usepackage{epstopdf}
\usepackage{amsthm}
\usepackage[rawfloats=true]{floatrow} 
\restylefloat{figure} 

\parskip 0pt
\raggedbottom
\usepackage{flushend}
\usepackage{float}
\setcounter{topnumber}{2}
\setcounter{bottomnumber}{2}
\setcounter{totalnumber}{4}

\setlength{\floatsep}{5pt plus 2pt minus 2pt}
\setlength{\textfloatsep}{5pt plus 2pt minus 2pt}
\setlength{\intextsep}{5pt plus 2pt minus 2pt}
\usepackage{framed}
\usepackage{url}


\usepackage[ruled,norelsize]{algorithm2e}

\makeatletter
\newcommand{\removelatexerror}{\let\@latex@error\@gobble}
\makeatother
\begin{document}

\newtheorem{thm}{Theorem}
\newtheorem{prop}{Proposition}
\newtheorem{reproof}{Proof}
\newtheorem{lem}{Lemma}
\newtheorem{defn}{Definition}
\newtheorem{ex}{Example}
\newtheorem{cor}{Corollary}
\newtheorem{prn}{Principle}
\newtheorem{case}{Case}
%
\title{Latent Factor Analysis of Gaussian Distributions under  Graphical Constraints}

\author{\IEEEauthorblockN{Md Mahmudul Hasan,
Shuangqing Wei, Ali Moharrer}}
\maketitle
\footnotetext[1]{Md M Hasan,   S. Wei and A. Moharrer  are with the school of Electrical Engineering and Computer Science, Louisiana State University, Baton Rouge, LA 70803, USA (Email: mhasa15@lsu.edu, swei@lsu.edu, alimoharrer@gmail.com). }



\begin{abstract}
We explore the algebraic structure of the  solution space of convex optimization problem Constrained Minimum Trace Factor Analysis (CMTFA), when  the population covariance matrix $\Sigma_x$ has an additional latent graphical constraint, namely, a latent star topology. In particular,  we have shown that CMTFA can have either a rank $ 1 $ or a rank $ n-1 $ solution and nothing in between. The special case of a rank $ 1 $ solution, corresponds to the case where just one latent variable captures all the dependencies among the observables, giving rise to a star topology. We found explicit conditions for both rank $ 1 $ and rank $n- 1$ solutions for CMTFA solution of $\Sigma_x$. As a basic attempt towards building a more general Gaussian tree, we have found a necessary and a sufficient condition for multiple clusters, each having rank $ 1 $ CMTFA solution, to satisfy a minimum probability to  combine together to build a Gaussian tree.   To support our analytical findings we have presented some numerical demonstrating the usefulness of the contributions of our work. 
\end{abstract}
\begin{IEEEkeywords}
Factor Analysis, MTFA, CMTFA, CMDFA
\end{IEEEkeywords} 
\section{INTRODUCTION}
\subsection{Motivation}
Factor Analysis (FA) is a commonly used tool in multivariate statistics to represent the correlation structure of a set of observables in terms of significantly smaller number of variables called ``latent factors". With the growing use in data mining, high dimensional data analytics, factor analysis has already become a prolific area of research \cite{chen2017structured}\cite{bertsimas2017certifiably}. Classical factor analysis models seek to decompose the correlation matrix of an $n$-dimensional  random vector ${\bf X} \in {\mathcal R}^n$, $\Sigma_x  $, as the sum of a diagonal matrix $ D $ and a Gramian matrix $ \Sigma_{x}-D $. 

The literature that approached factor analysis can be classified in three major categories. Firstly, algebraic approaches \cite{albert1944matrices} and  \cite{drton2007algebraic}, where the principal aim was to give a characterization of the vanishing ideal of the set of symmetric $ n\times n $ matrices that decompose as the sum of a diagonal matrix and a low rank matrix, did not offer scalable algorithms for higher dimensional statistics. Secondly, factor analysis  via heuristic local optimization techniques, often based on the expectation maximization algorithm, were computationally tractable  but offered no provable performance guarantees. The third and final type of approach, based on convex optimization methods namely Minimum Trace Factor Analysis (MTFA)\cite{ledermann1940problem} and Minimum Rank Factor Analysis (MRFA)\cite{harman1976modern}, guaranteed performance and were computationally tractable. As the name suggests MRFA seeks to minimize the rank of $ \Sigma_{x}-D $ and MTFA minimizes the trace of $ \Sigma_{x}-D $. However,  MTFA solution could lead to negative values for the diagonal entries of the matrix $ D $. To solve this problem Constrained Minimum Trace Factor Analysis (CMTFA) was proposed \cite{bentler1980inequalities}, which imposes extra constraint of requiring $ D $ to be Gramian. Computational aspects of CMTFA and uniqueness of its solution were discussed in \cite{ten1981computational}. Though trace was used as the objective function, the paper set the groundworks for a broader class of convex optimization problems. Inspired by that groundwork Moharrer and Wei  in \cite{moharrer2017algebraic}  added another variety to the same class of problems that uses the determinant of a matrix as the objective function and named the problem Constrained Minimum Determinant Factor Analysis (CMDFA).

Gaussian graphical models \cite{gomez2014sensitivity} \cite{larranaga2013review} \cite{xiu2018multiple} have enjoyed  wide variety  of applications in  economics \cite{dobra2010modeling},  biology \cite{ahmed2008time} \cite{durbin1998biological},  image recognition \cite{besag1986statistical} \cite{geman1984stochastic}, social networks \cite{vega2007complex} \cite{wasserman1994social}  and many other fields. Among the Gaussian graphical models, we are particularly interested in the Gaussian latent tree models \cite{shiers2016correlation} where the ovservables are the leaves of the tree and the unovserved variables are the interior nodes. In the simplest form a Gaussian latent tree with just one node is a 'star'. Gaussian latent trees are highly favored because of their sparse structure \cite{mourad2013survey} and the availability of computationally efficient  algorithms to learn their underlying topologies \cite{choi2011learning} \cite{saitou1987neighbor}. For the purpose of clarity about the scope of our attention, it should be noted that, since there are algorithms available to learn the population covariance matrix $ \Sigma_{x} $ form data, that part of the work is beyond our scope and we are assuming $ \Sigma_{x} $ is known to us. Some of the existing algorithms addressed the problem of assigning a latent structure to the available data \cite{chow1968approximating}, we are not doing that either. Such algorithmic approaches designate sparse structures to the data for interpretation but they can not   guarantee  optimality in terms of the number of nodes, number of latent variables and the weight of individual edges of the graphical structure \cite{khajavi2018model} \cite{li2018learning}\cite{chow1968approximating}\cite{mourad2013survey}. Our primary focus is on the optimality and the solution space of the optimal solution. Hence, our scope is particularly limited to analyzing the solution space of the convex optimization problem CMTFA and the underlying structure to the data suggested by the solution.  

CMTFA guarrantees optimality of the solution, but to  designate a generative structure to the data we need the knowledge of the respective solution space . The main contribution of this work is based on the analysis of the solution space of CMTFA. Clearly our work is  not dealing  with the algorithm side of CMTFA, rather we aimed at understanding the solution space and associate a generative model for the data and thus make a more comprehensive and useful version of the respective solution. Though our ultimate goal is to associate a generative Gaussian tree model to the observed jointly Gaussian random vector $ \vec{X} $ ensuring the optimality of the solution, the scope of this particular version of our work is restricted to just one particular node of the tree. This is a handy first step towards building a complete and optimal tree structure. Figure \ref{figtree} exemplifies the aforementioned tree structure, where $ X_{i} $s are the observables forming the observable vector $ \vec{X} $ and $ Y_{j} $s are the latent variables  that capture the dependencies in $ X_{i} $s and form the vector of latent variables $ \vec{Y} $, whereas Figure \ref{figstar} demonstrates one particular node of a tree. 
\begin{figure}
\centering
\includegraphics[scale=0.5]{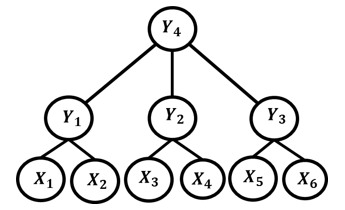}
\caption{Example of a Gaussian Tree}
\label{figtree}
\end{figure}
\begin{figure}
\centering
\includegraphics[scale=0.5]{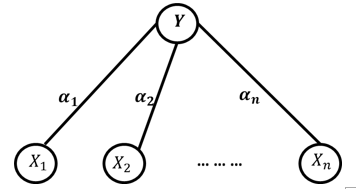}
\caption{A star topology}
\label{figstar}
\end{figure}

The motivation behind this work resides in the merit of trace as an  objective function. 
Ideally rank minimization approaches would lead to the least number of latent factors but they are computationally much more challenging than trace minimization approaches. Firstly, trace minimization is favored over rank minimization because of the fact that the trace of a matrix being a continuous function  offers more flexibility than the rank of matrix which is a discrete function. Secondly, Mitra and Alizahed proved in a dissertation \cite{fazel2002matrix} that, trace as an objective function is huristically as effective as the rank of a matrix. For example in \cite{liu2014factor}, in a tensor completion problem, nuclear norm of a matrix defined as the sum of the singular values of a tensor, was minimized as an equivalent problem to rank minimization. This is justified, because in \cite{fazel2002matrix} it was shown that  nuclear norm is the closest covex surrogate to the rank of a matrix. For the specific class of symmetric positive semi-definite  matrix that we are dealing with, the nuclear norm of the matrix is same as the sum of the eigen values of the matrix. Which makes trace of the matrix the closest convex surrogate to the rank of the matrix. So, in our search for the minimun number of latent factors to explain the origin of a set of observables, trace minimization is the closest feasible technique to rank minimization.  

The most closely related work to our work is  \cite{saunderson2012diagonal}, where the same necessary and sufficient condition was found on the subspace of $ \Sigma_{x} $ for MTFA that we found for CMTFA  solution of $ \Sigma_{x} $ to recover a star structure when $\Sigma_x$ is equipped with a latent star graphical constraint.   For clarification, the recovery of a star topology under a star constraint in a factor analysis problem means the resulting decomposition of $\Sigma_x$  ends up with $\Sigma_x-D$  having rank one i.e. a single latent variable can interpret the correlation entries in $\Sigma_x$. The main difference between their work \cite{saunderson2012diagonal} and ours is that, we also fully characterized the solution to CMTFA under a star constraint  for situations where the recovery of the latent star fails, an issue which they did not address. In particular, we proved that there are only two possible solutions to the CMTFA problem under a latent star constraint, one of which is the recovery of the star (i.e. the optimal number of latent variable is $k=1$), and the other with the optimal number of latent variables $k=n-1$. We found sufficient and necessary conditions for both cases. 

So the primary focus of this work is to explore the  solution space of a convex optimization problem, namely CMTFA, under a latent star constraint on $\Sigma_x$. Since our ultimate goal is to build a Gaussian tree and simultaneously maintain the locally optimum solutions, as a basic attempt towards that goal, we have found a necessary condition for multiple clusters, each having rank $ 1 $ CMTFA solution, to satisfy a minimum probability to  combine together to build a Gaussian tree. The insights obtained in this study will play a critical role when seeking analytical results of the factor analysis problems, when $\Sigma_x$ has more general latent tree structure, which is under investigation and will be presented in our future works. 
   
To support the rigorous analytical work that we have carried out,  at the end of this paper we have presened some numerical data. The numerical data shows  the difference between traces of the optimal solution to the star solution when the optimal solution is not a rank $ 1 $ matrix. In the next paragraph we enumerate the notable contributions of our work. 

 The major contributions of our work can be listed as the following,
\begin{itemize}
   \item We characterized the solution space of CMTFA. 
   \item Found necessary and sufficient conditions under which CMTFA solution of $ \Sigma_{x} $ is a star as well as when it is not a star. 
      \item We have both analytically and numerically shown the optimality of a non-star solution over the naive adoption of star when star is actually not the CMTFA solution of $ \Sigma_{x} $. 
   \item We have found a necessary and a sufficient condition for multiple clusters combine together to form a Gaussian tree, when each of the clusters has rank $ 1 $ CMTFA solution. 
  \end{itemize}

The rest of the paper is organized as follows: section II has the important definitions and notations, formulation of the problem is given in section III, necessary and sufficient condtions  for  rank $ 1 $ and rank $ n-1 $ CMTFA solutions are presented in sections  IV and V respectively, section VI has numerical results justifying the analytical findings of our work, section VII has the necessary and sufficient conditions for building a Gaussian tree under special settings, section VIII has the conclusion and at the end we have references.

\section{Definitions and Notations}
Let $ \vec{b} $ be a real $ n $ dimensional column vector  and $ A $ be an $ n\times n $ matrix. As in literature in general we denote the $ i $th element of $ \vec{b} $ as $ b_{i} $ and the $ (i,j) $th element of $ A $ as $ A_{i,j} $. Here we define all the vector operations and notations in terms of  $ \vec{b} $ and $ A $, that will carry their meaning on other vectors and  matrices throughout this paper unless stated otherwise. Along with these general notaions there will be some specific notations in the paper which we will define in the context of the particular cases they appear. 

 Vectors $ \vec{a}_{i,*} $ and $  \vec{a}_{*,i}  $ denote the $i$th row and $ i $th column vector of matrix $ A $ respectively. Function $ \lambda_{min}(A) $ is defined to be the smallest eigen value of matrix $ A $. $ N(A) $ stands for the null space of matrix $ A $. 

Vectors $ \vec{1} $ and $ \vec{0} $ are the $ n $ dimensional column vectors with each element equal to $ 1 $ and $ 0 $ respectively. When we write $ \vec{b}\geq 0 $ we mean that each element of the vector $ b(i)\geq 0, 1\leq i \leq n $.  $ \vec{b}^{2} $ is the Hadamard product of vector $ \vec{b} $ with itself. $ ||\vec{b} ||$ denotes the $ L_{2} $ norm of vector $ \vec{b} $. 

Now we define two terms i.e. \textit{dominance} and \textit{non-dominance} of a vector which will repeatedly appear throughout the paper. When we talk about the dominance or non-dominance of any vector $ \vec{b} $ we assume that the elements of the vector are sorted in a way such that $ |b_{1}|\geq |b_{2}|\geq \dots \geq |b_{n}|$. We call vector $ \vec{b} $ dominant and $ b_{1} $ the dominant element if for the above sorted vector $ |b_{1}|>\sum_{j\neq 1} |b_{j}| $  holds.  Otherwise $ \vec{b} $ is non-dominant.

\section{Formulation of the Problem} 
As we mentioned in the introduction, we are assuming that the population covariance matrix $ \Sigma_{x} $ is available, hence learning $ \Sigma_{x} $ from data is beyond our concern. We are not doing data mining here either i.e. building latent patterns to explain the origin of the data, as we are aware that those solutions may not be optimal ones. We are particularly concerned with analyzing the CMTFA solution space of $ \Sigma_{x} $ and understanding the underlying structures of the respective optimal solution. 

Traditional factor analysis problems seeks to decompose $ \Sigma_{x} $ as the sum of a low rank (rank $ <n $ ) component and a diagonal matrix. We are interested in the particular case where, the observables  $ \{X_{1}, ..., X_{n}\}$  are jointly Gaussian random variables, forming the jointly Gaussian random vector $ \vec{X}\sim \mathcal{N}(0,\Sigma_{x}) $  and all the diagonal entries of $ \Sigma_{x} $ are $ 1 $ as given by by \eqref{sigmax}.
 \begin{equation}\label{sigmax}
\Sigma_{x}=\begin{bmatrix}
1&\alpha_{1}\alpha_{2}&\dots&\alpha_{1}\alpha_{n}\\
\alpha_{2}\alpha_{1}&1&\dots&\alpha_{2}\alpha_{n}\\
\vdots&\vdots&\ddots&\vdots\\
\alpha_{n}\alpha_{1}&\alpha_{n}\alpha_{2}&\dots&1\\
\end{bmatrix}
\end{equation}
where  $ 0< |\alpha_{j}|< 1 $, $ j= 1,2, \dots ,n $ form the real column vector $ \vec{\alpha} $, $\vec{\alpha} = [\alpha_1, \dots, \alpha_n]' \in {\mathcal R}^n$  and 
\begin{align}\label{order_alpha}
|\alpha_{1}|\geq |\alpha_{2}| \geq \dots \geq |\alpha_{n}|  
\end{align}
The above $ \Sigma_{x} $ is  equipped with  a latent star topology, which graphically refers to Figure \ref{figstar}, which is  one particular node of the tree given by Figure \ref{figtree}. As we can see, a set of observables $ X_{1}, \dots , X_{n} $ connected to a common latent variable $ Y $ makes it a star topology indicating the  conditional independence given by \eqref{conditional independence} among the observables.

\begin{align}\label{conditional independence}
p(X_{1}X_{2},\dots, X_{n}|Y)=\Pi_{i=1}^{n}p(X_{i}|Y)
\end{align} 
The above conditional independence equivalently refers the generation of $ \Sigma_{x} $ to the following graphical model. 
\begin{align}
& \begin{bmatrix}
X_{1}\\
\vdots\\
X_{n}
\end{bmatrix}
=\begin{bmatrix}
\alpha_{1}\\
\vdots\\
\alpha_{n}
\end{bmatrix}
Y
+
\begin{bmatrix}
Z_{1}\\
\vdots\\
Z_{n}
\end{bmatrix}\label{graphical model}\\
\Rightarrow &\vec{X}= \vec{\alpha}Y+\vec{Z}
\end{align}
where
\begin{itemize}
\item $ \{X_{1}, ..., X_{n}\}$  are conditionally independent Gaussian random variables given $ Y $, forming the jointly Gaussian random vector $ \vec{X}\sim \mathcal{N}(0,\Sigma_{x}) $ where $ Y\sim \mathcal{N}(0,1) $.
\item $ \{Z_{1}, ..., Z_{n} \}$ are independent Gausian random varables with $Z_{j}\sim \mathcal{N}(0,1-\alpha_{j}^{2})\quad 1\leq j \leq n $ forming the Gaussian random vector $ \vec{Z} $.
\end{itemize}
The problem of finding a low rank solution for the decomposition of $ \Sigma_{x} $ under certain constrains has been equivalently formulated as a particular class of convex optimization problem in \cite{della1982minimum}, which ensures both optimality and structure unlike the algorithmic approaches. CMTFA used the trace of a matrix as the objective function and sought to decompose $ \Sigma_{x} $ as  given by \eqref{decomposition},
such that the trace of $ \Sigma_{t}$ (defined as $ \Sigma_{t}=\Sigma_{x}-D $) is minimized or equivalently the trace of $ D $ is maximized under the constraint that both $ (\Sigma_{x}-D) $ and $ D $ are Gramian matrices.  
\begin{align}\label{decomposition}
\Sigma_{x}=(\Sigma_{x}-D)+D
\end{align}

As CMTFA seeks low rank solutions to equation \eqref{decomposition}, the end result  may not necessarily be a rank $ 1 $ solution. When it actually ends up with  a rank $ 1 $  solution,  
we have just one latent variable producing $ n $ observables as in Figure \ref{figstar}, which corresponds to matrix $ \Sigma_{t} $ in  equation \eqref{decomposition} having the rank $ 1 $  solution given by,
\begin{align}\label{sigmatnd}
\Sigma_{t,ND}=\begin{bmatrix}
\alpha_{1}^{2}&\alpha_{1}\alpha_{2}&\dots&\alpha_{1}\alpha_{n}\\
\alpha_{2}\alpha_{1}&\alpha_{2}^{2}&\dots&\alpha_{2}\alpha_{n}\\
\vdots&\vdots&\ddots&\vdots\\
\alpha_{n}\alpha_{1}&\alpha_{n}\alpha_{2}&\dots&\alpha_{n}^{2}\\
\end{bmatrix}
\end{align}
The above rank $ 1 $ solution to CMTFA equivalently refers the generation of  $ \Sigma_{x} $ to the  graphical model given by \eqref{graphical model}.

However, since the solution to \eqref{decomposition} may not always be rank $ 1 $, it remains to be seen if CMTFA solution to $ \Sigma_{x} $ actually recovers the graphical model given by \eqref{graphical model}. Also to be investigated are the exact solutions to CMTFA problem, if it fails to recover the underlying star topology. 

So the problem that we are looking at, can be stated as follows: we are trying to find a close form analytical solution for CMTFA problem and gain insights about the underlying graphical structure. To be more specific our primary focus is to see if the underlying structure of  CMTFA solution to $ \Sigma_{x} $ with a star constraint is still a star or mathematically speaking to see  if $(\Sigma_{x}-D^{*}) $ is a rank one matrix given that $ D^{*} $ is the solution to \eqref{decomposition}. 
The following Theorem given in \cite{della1982minimum}, sets the ground rules for a matrix $ D^{*} $ to be the CMTFA solution for \eqref{decomposition}. 
\begin{thm}\label{necessary and sufficient}
The matrix  $  D^{*}$ is a solution of the CMTFA  problem if and only if $  D_{i,i}^{*}\geq 0, 1\leq i \leq n$, $ \lambda_{min}( \Sigma_{x}-D^{*})=0 $,  and there exists  $ n\times r $ matrix $ T $ such that $ \vec{t}_{*,i}\in \mathcal{N}(\Sigma_{x}-D^{*}), \quad i=1,....,r $ and the following holds,
\begin{align}\label{13}
\vec{1}=\sum_{i=1}^{r}\vec{t}_{*,i}^{2} - \sum_{j\in I( D^{*})}\mu_{j}\vec{\xi_{j}}
\end{align}
where $ r\leq n $ indicating the number of columns of the matrix $ T $,   $ I( D^{*})=\{i:D_{i,i}^{*}=0, 1\leq i\leq n\} $, $ \{\mu_{j}, \quad j\in I(D^{*}) \}$  are non-negative numbers and $\{ \vec{\xi}_{j}, j\in I(D^{*})\}$ are column vectors in $ {\mathcal R}^{n} $ with all the components equal to $ 0 $ except for the $ j $th component which is equal to $ 1 $.
\end{thm}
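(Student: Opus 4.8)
The plan is to recognize CMTFA as a convex semidefinite program and to read off \eqref{13} together with the other conditions from its Karush--Kuhn--Tucker (KKT) system. Writing $D=\mathrm{diag}(D_{1,1},\dots,D_{n,n})$, the problem is to minimize $-\mathrm{tr}(D)$ over diagonal $D$ subject to the $n$ scalar constraints $D_{i,i}\ge 0$ and the linear matrix inequality $\Sigma_{x}-D\succeq 0$; this is convex, and Slater's condition holds because $D=\epsilon I$ with $0<\epsilon<\lambda_{min}(\Sigma_{x})$ is strictly feasible (recall $\Sigma_{x}=\mathrm{diag}(1-\alpha_{j}^{2})+\vec{\alpha}\vec{\alpha}'\succ 0$). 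Hence strong duality holds and the KKT conditions are necessary and sufficient for optimality of a feasible $D^{*}$. I would first dispatch the singularity requirement: if $D^{*}$ is optimal and $\Sigma_{x}-D^{*}\succ 0$, then adding a small positive amount to one diagonal entry of $D^{*}$ keeps the iterate feasible while strictly increasing $\mathrm{tr}(D)$, a contradiction; together with feasibility $\Sigma_{x}-D^{*}\succeq 0$ this gives $\lambda_{min}(\Sigma_{x}-D^{*})=0$.

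For the remaining conditions, introduce a symmetric PSD multiplier $M\succeq 0$ for $\Sigma_{x}-D\succeq 0$ and nonnegative scalars $\mu_{i}\ge 0$ for $D_{i,i}\ge 0$, and form the Lagrangian $L(D,M,\vec{\mu})=-\mathrm{tr}(D)-\sum_{i}\mu_{i}D_{i,i}-\mathrm{tr}\!\big(M(\Sigma_{x}-D)\big)$. Stationarity in the diagonal entries of $D$ (the only free variables, since the off-diagonal entries are fixed to $0$) gives $M_{i,i}=1+\mu_{i}$ for every $i$, so $M_{i,i}\ge 1$. The scalar complementary slackness $\mu_{i}D_{i,i}^{*}=0$ then forces $\mu_{i}=0$, hence $M_{i,i}=1$, for $i\notin I(D^{*})$, while for $i\in I(D^{*})$ the $\mu_{i}\ge 0$ are exactly the nonnegative numbers in the statement. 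The matrix complementary slackness $\mathrm{tr}\!\big(M(\Sigma_{x}-D^{*})\big)=0$, with both $M\succeq 0$ and $\Sigma_{x}-D^{*}\succeq 0$, is equivalent to $M(\Sigma_{x}-D^{*})=0$, i.e.\ every column of $M$ lies in $N(\Sigma_{x}-D^{*})$.

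To reach the vector identity \eqref{13}, factor $M=TT'$ with $T$ an $n\times r$ matrix and $r=\mathrm{rank}(M)\le n$ (for instance from a spectral decomposition of $M$); its columns $\vec{t}_{*,i}$, $i=1,\dots,r$, lie in $N(\Sigma_{x}-D^{*})$, and the diagonal of $M$ is exactly the Hadamard sum $\sum_{i=1}^{r}\vec{t}_{*,i}^{2}$. Substituting into $M_{i,i}=1+\mu_{i}$ and using $\mu_{i}=0$ for $i\notin I(D^{*})$ yields $\sum_{i=1}^{r}\vec{t}_{*,i}^{2}=\vec{1}+\sum_{j\in I(D^{*})}\mu_{j}\vec{\xi}_{j}$, which is \eqref{13}. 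The converse is immediate from sufficiency of KKT for convex problems: given a feasible $D^{*}$ with $D_{i,i}^{*}\ge 0$, $\lambda_{min}(\Sigma_{x}-D^{*})=0$, and $T$, $\{\mu_{j}\}_{j\in I(D^{*})}$ as stated, setting $M=TT'$ and $\mu_{i}=0$ for $i\notin I(D^{*})$ produces a point obeying dual feasibility, stationarity, and both complementary slackness conditions, which certifies optimality of $D^{*}$.

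The step I expect to be the main obstacle is the careful treatment of the semidefinite part: pinning down strong duality cleanly, and translating the matrix complementary slackness $\mathrm{tr}\!\big(M(\Sigma_{x}-D^{*})\big)=0$ first into the null-space containment and then, through a (non-unique) PSD factorization with controlled rank $r\le n$, into the precise form \eqref{13}. The remaining pieces --- stationarity in $D$, the scalar complementary slackness, and the perturbation argument for $\lambda_{min}(\Sigma_{x}-D^{*})=0$ --- are routine.
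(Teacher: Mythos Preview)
Your KKT derivation is correct and is the standard route to this characterization. However, there is no proof in the paper to compare against: the paper does not prove this theorem at all but simply quotes it from \cite{della1982minimum} (the paragraph introducing the theorem reads ``The following Theorem given in \cite{della1982minimum}, sets the ground rules\dots''). So the comparison you were asked to make is vacuous here.

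For what it is worth, your argument matches the spirit of the original reference: cast CMTFA as a convex SDP, invoke strong duality via a strictly feasible point, and read off stationarity plus the two complementary slackness conditions. The only cosmetic remark is that your separate perturbation argument for $\lambda_{min}(\Sigma_{x}-D^{*})=0$ is redundant once you have the KKT system: if $\Sigma_{x}-D^{*}\succ 0$ then its null space is trivial, forcing $M=0$, which contradicts $M_{i,i}=1+\mu_{i}\ge 1$. Either way the conclusion is immediate.
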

The theorem clearly specifies the requirements both for a matrix to be a candidate for the CMTFA solution of $ \Sigma_{x} $, as well as the null space matrix of that solution matrix.

In the next section, we explicitly analyze the conditions under which the CMTFA solution to $ \Sigma_{x} $  recovers the graphical model given by \eqref{graphical model} or equivalently speaking CMTFA solution becomes the rank $ 1 $ matrix given by \eqref{sigmatnd}. In section V we prove the sufficient and necessary condition that, when CMTFA solution of $ \Sigma_{x} $ does  not recover a star structure i.e., if the solution is not a rank $ 1 $ solution, then the solution is rank $ n-1 $. 
\section{CMTFA Non-dominant Case}
\begin{figure}
\centering
\includegraphics[scale=0.5]{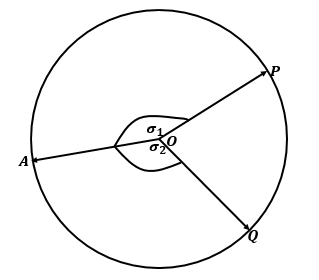}
\caption{Equal length on the surface of a sphere. (dimension $ n=3 $)}
\label{fig1sphere}
\end{figure}
In this section we analyze the conditions under which the CMTFA solution of $ \Sigma_{x} $ recovers a star structure. Understanding Lemma \ref{1lem} will be a good preparatory work before we proceed to state and prove Theorem \ref{1th}. The Lemma  also provides a geometric interpretaion that helps us view the problem in a broader perspective. 
\begin{lem}\label{1lem}
Non-dominance of vector $ \vec{\alpha} $ given by \eqref{non-dominanceof_alpha} is a necessary condition for the existence of such  $ n\times r $ matrix $ T $  that 
$ \vec{t}_{*,i}\in N(\Sigma_{t,ND}), \quad 1\leq i \leq r $  and $ ||\vec{t}_{j,*}||^{2}=1, \quad 1\leq j \leq n $. 

\begin{align}\label{non-dominanceof_alpha}
|\alpha_{1} |\leq \sum_{i=2}^{n}|\alpha_{i}|
\end{align}
\end{lem}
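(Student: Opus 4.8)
The plan is to work directly with the rank-one matrix $\Sigma_{t,ND}$, whose columns are $\vec{t}_{*,i} = \alpha_i\vec{\alpha}$ up to scaling, so that $\Sigma_{t,ND} = \vec{\alpha}\vec{\alpha}'$. Its null space is the hyperplane $N(\Sigma_{t,ND}) = \{\vec{v}\in\mathcal{R}^n : \vec{\alpha}'\vec{v} = 0\}$, an $(n-1)$-dimensional subspace. Thus the existence of an $n\times r$ matrix $T$ with $\vec{t}_{*,i}\in N(\Sigma_{t,ND})$ and unit-norm rows amounts to finding $r$ vectors orthogonal to $\vec{\alpha}$ whose arrangement as the rows of $T$ gives each row Euclidean norm exactly $1$. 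First I would translate the row-norm condition into a statement about a Gram-type matrix: letting $P$ be the orthogonal projection onto $N(\Sigma_{t,ND})$, i.e. $P = I - \vec{\alpha}\vec{\alpha}'/\|\vec{\alpha}\|^2$, the existence of such a $T$ (with $r$ allowed up to $n$) is equivalent to the existence of a positive semidefinite matrix $M$ with $\mathrm{range}(M)\subseteq N(\Sigma_{t,ND})$ and $M_{i,i}=1$ for all $i$; indeed $M = TT'$ recovers the rows of $T$ by any square-root factorization, and conversely such an $M$ factors as $TT'$. So the question becomes: does there exist a correlation matrix (unit diagonal PSD matrix) supported on the hyperplane orthogonal to $\vec{\alpha}$?

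The key observation is that any such $M$ must satisfy $M\vec{\alpha}' $... more precisely, since $\mathrm{range}(M)\perp\vec{\alpha}$ is not quite what we want — rather $M$ must annihilate $\vec{\alpha}$ from the appropriate side. Let me instead use the cleaner route: each row $\vec{t}_{j,*}$, viewed as a vector, must be such that the full matrix $T$ has columns in $N(\Sigma_{t,ND})$. Equivalently $\Sigma_{t,ND}T = 0$, i.e. $\vec{\alpha}(\vec{\alpha}'T) = 0$, i.e. $\vec{\alpha}'T = \vec{0}'$, meaning $\sum_j \alpha_j \vec{t}_{j,*} = \vec{0}$. Combined with $\|\vec{t}_{j,*}\|^2 = 1$ for each $j$, I would then take norms: $\|\sum_j \alpha_j \vec{t}_{j,*}\| = 0$, and by the triangle inequality $\|\alpha_1 \vec{t}_{1,*}\| = \|{-}\sum_{j\geq 2}\alpha_j\vec{t}_{j,*}\| \leq \sum_{j\geq 2}|\alpha_j|\,\|\vec{t}_{j,*}\| = \sum_{j\geq 2}|\alpha_j|$, while the left side equals $|\alpha_1|$. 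This is exactly the non-dominance inequality \eqref{non-dominanceof_alpha}, so non-dominance of $\vec{\alpha}$ is necessary.

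The main steps in order are therefore: (1) identify $\Sigma_{t,ND} = \vec{\alpha}\vec{\alpha}'$ and hence $N(\Sigma_{t,ND}) = \vec{\alpha}^\perp$; (2) rewrite $\vec{t}_{*,i}\in N(\Sigma_{t,ND})$ for all $i$ as the single vector identity $\sum_j \alpha_j \vec{t}_{j,*} = \vec{0}$ obtained from $\vec{\alpha}'T = \vec{0}'$; (3) isolate the dominant term $\alpha_1\vec{t}_{1,*} = -\sum_{j\geq 2}\alpha_j \vec{t}_{j,*}$ and apply the triangle inequality together with the unit row-norm constraint to conclude $|\alpha_1|\leq\sum_{j\geq 2}|\alpha_j|$. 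I expect step (2) to require the only genuine care — making sure the passage between "every column of $T$ lies in $N(\Sigma_{t,ND})$" and the row-wise weighted-sum identity is airtight, since it hinges on the rank-one structure collapsing $\Sigma_{t,ND}T=0$ to $\vec{\alpha}'T = \vec{0}'$ (which uses $\vec{\alpha}\neq\vec{0}$, guaranteed by $0<|\alpha_j|<1$). For the geometric interpretation alluded to before the lemma and illustrated in Figure \ref{fig1sphere}, I would remark that the rows $\vec{t}_{j,*}$ are $n$ unit vectors (points on a sphere) whose $\alpha_j$-weighted sum vanishes, and the obstruction is precisely that no single weight can exceed the sum of the rest — the same closure condition as for a weighted set of unit vectors to sum to zero.
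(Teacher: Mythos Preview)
Your proposal is correct and follows essentially the same route as the paper: from $\Sigma_{t,ND}=\vec{\alpha}\vec{\alpha}'$ you reduce $\Sigma_{t,ND}T=0$ to $\vec{\alpha}'T=\vec{0}'$, rewrite this as $\sum_j \alpha_j \vec{t}_{j,*}=\vec{0}$, isolate the first term, and apply the triangle inequality together with the unit row-norm constraint. The Gram-matrix detour you sketch is unnecessary (and you rightly abandon it), but your final three-step argument matches the paper's proof line for line; if anything, your application of the triangle inequality is cleaner than the paper's, which passes through a questionable squared-norm step before arriving at the same conclusion.
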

\begin{proof}[\textbf{Proof of Lemma \ref{1lem}:}]
Let $T $ be an $ n\times n $  matrix i.e  $ r=n $. We need,

\begin{align}
&\Sigma_{t,ND}T=\vec{0}\quad
\Rightarrow \vec{\alpha}T=\vec{0}\quad
\Rightarrow \sum_{i=1}^{n}\alpha_{i}\vec{t}_{i,*}=\vec{0}\notag\\
\Rightarrow &\alpha_{1}\vec{t}_{1,*}=-\sum_{i=2}^{n}\alpha_{i}\vec{t}_{i,*}\notag\\
\Rightarrow &||\alpha_{1}\vec{t}_{1,*}||^{2}=||-\sum_{i=2}^{n}\alpha_{i}\vec{t}_{i,*}||^{2}\notag\\
\Rightarrow &||\alpha_{1}\vec{t}_{1,*}||^{2}\leq\sum_{i=2}^{n}||\alpha_{i}\vec{t}_{i,*}||^{2}\\
\Rightarrow &|\alpha_{1}| ||\vec{t}_{1,*}||\leq\sum_{i=2}^{n}|\alpha_{i}| ||\vec{t}_{i,*}|| \notag\\
\Rightarrow & |\alpha_{1}|\leq \sum_{i=2}^{n}|\alpha_{i}|, \quad [\textit{because, $ ||\vec{t}_{i,*}||=||\vec{t}_{1,*}||, 2\leq i \leq n $}\notag
\end{align}
That completes the proof of the Lemma.
\end{proof}
For  a $ 3 $ dimensional geometric interpretation of the above necessary condition, let us consider that the matrix $ T $ has $ 3 $ row vectors.  Since we require  $  ||\vec{t}_{1,*}||^{2}=||\vec{t}_{2,*}||^{2}=||\vec{t}_{3,*}||^{2}=1 $, we can consider $ \vec{t}_{1,*}$, $\vec{t}_{2,*}$ and $\vec{t}_{2,*} $ to be three different points on the surface of a $ 3 $ dimensional hemisphere of radius $1 $ and be represented by the vectors $ \vec{OA}$, $\vec{OP}$ and  $\vec{OQ} $ respectively as in Figure \ref{fig1sphere}. Now, if $ |\alpha_{1}|> |\alpha_{2}|+|\alpha_{3}| $ it will be impossible to have  $ ||\alpha_{1}\vec{OA}||=|| \alpha_{2}\vec{OP}+\alpha_{3}\vec{OQ}|| $. But if $ |\alpha_{1}|\leq |\alpha_{2}|+|\alpha_{3}| $, we can always choose angles $\sigma_{1} $ and $\sigma_{2} $ such that $ ||\alpha_{1}\vec{OA}||=||\alpha_{2}\vec{OP}+\alpha_{3}\vec{OQ}|| $ holds, which is necessary for the orthogonality between the vector $ [\alpha_{1}, \alpha_{2}, \alpha_{3}]  $ and matrix $ T $ in this particular case. 

Now we  proceed to state and prove the statement of Theorem \ref{1th}, that has the main outcome of this subsection. 
\begin{thm}\label{1th}
CMTFA solution of $ \Sigma_{x} $ is $ \Sigma_{t,ND} $ if and only if $ \vec{\alpha} $ is non-dominant. 
\end{thm}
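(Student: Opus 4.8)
The plan is to route everything through the characterization in Theorem~\ref{necessary and sufficient}, after noting that the diagonal matrix attached to the candidate rank-one solution is non-degenerate. Set $D^{*}=\Sigma_{x}-\Sigma_{t,ND}$; comparing \eqref{sigmax} with \eqref{sigmatnd} shows $D^{*}$ is diagonal with $D^{*}_{i,i}=1-\alpha_{i}^{2}$, which is strictly positive since $0<|\alpha_{i}|<1$. Hence $I(D^{*})=\emptyset$, so the $\mu_{j}$-sum in \eqref{13} disappears, and (using also that $\Sigma_{t,ND}=\vec{\alpha}\vec{\alpha}'$ is positive semidefinite of rank one, so $\lambda_{min}(\Sigma_{t,ND})=0$ and $D^{*}_{i,i}\ge 0$ automatically) Theorem~\ref{necessary and sufficient} collapses to: $\Sigma_{t,ND}$ is a CMTFA solution of $\Sigma_{x}$ if and only if there is an $n\times r$ matrix $T$ with $\vec{t}_{*,i}\in N(\Sigma_{t,ND})$ for all $i$ and $\sum_{i=1}^{r}\vec{t}_{*,i}^{2}=\vec{1}$, the last equation meaning precisely that every row of $T$ is a unit vector, $\|\vec{t}_{j,*}\|=1$.

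The ``only if'' direction is then immediate from Lemma~\ref{1lem}: if $\Sigma_{t,ND}$ is the CMTFA solution, a matrix $T$ as above exists, and Lemma~\ref{1lem} forces \eqref{non-dominanceof_alpha}, i.e. $\vec{\alpha}$ is non-dominant.

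For the ``if'' direction I would build $T$ explicitly from non-dominance. Since $N(\Sigma_{t,ND})=\{\vec{v}:\vec{\alpha}'\vec{v}=0\}$, constructing $T$ amounts to finding unit-length row vectors $\vec{t}_{1,*},\dots,\vec{t}_{n,*}$ (in some $\mathbb{R}^{r}$) with $\sum_{i=1}^{n}\alpha_{i}\vec{t}_{i,*}=\vec{0}$. Writing $\vec{t}_{i,*}=\mathrm{sgn}(\alpha_{i})\,\vec{w}_{i}/|\alpha_{i}|$ turns this into finding vectors $\vec{w}_{1},\dots,\vec{w}_{n}$ with $\|\vec{w}_{i}\|=|\alpha_{i}|$ and $\sum_{i}\vec{w}_{i}=\vec{0}$, i.e. a closed planar polygon with prescribed edge lengths $|\alpha_{1}|\ge\cdots\ge|\alpha_{n}|>0$. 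Such a polygon exists iff the longest edge does not exceed the sum of the others, $|\alpha_{1}|\le\sum_{i\ge 2}|\alpha_{i}|$, which is exactly non-dominance \eqref{non-dominanceof_alpha} (the $n$-dimensional version of the geometric picture in Figure~\ref{fig1sphere}; concretely, as the $\vec{w}_{i}$ with $i\ge 2$ range over vectors of the prescribed lengths, the attainable values of $\|\sum_{i\ge 2}\vec{w}_{i}\|$ form an interval that contains $|\alpha_{1}|$ by \eqref{non-dominanceof_alpha} together with the ordering \eqref{order_alpha}). Taking $T$ ($r=2$) to have these rows yields $\vec{t}_{*,i}\in N(\Sigma_{t,ND})$ and $\sum_{i=1}^{2}\vec{t}_{*,i}^{2}=\vec{1}$, so by the specialized form of Theorem~\ref{necessary and sufficient} above, $D^{*}$ is a CMTFA solution; by uniqueness of the CMTFA solution \cite{ten1981computational} it is the CMTFA solution, i.e. $\Sigma_{x}-D^{*}=\Sigma_{t,ND}$.

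I expect the only genuine obstacle to be this polygon-closing step in the ``if'' direction — tracking the signs of the $\alpha_{i}$ and showing that non-dominance, already necessary by Lemma~\ref{1lem}, is also sufficient for the $\alpha_{i}$-weighted unit vectors to cancel. Everything else is bookkeeping with Theorem~\ref{necessary and sufficient} once the observation $I(D^{*})=\emptyset$ has been made.
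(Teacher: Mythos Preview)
Your proposal is correct and proves the same statement, but the sufficiency step takes a genuinely different route from the paper. The paper builds $T$ with $r=n$ columns by writing down explicit null-space basis vectors $v_{1},\dots,v_{n-1}$ of $\Sigma_{t,ND}$, adjoining one more column with sign parameters $c_{i}\in\{\pm 1\}$, and then right-multiplying by a diagonal matrix $B$; the row-norm conditions become a linear system in the entries of $\beta=BB'$, and the authors verify $0\le \beta_{nn}\le 1$ precisely under non-dominance. Your argument instead reduces to a planar polygon-closing problem: after the substitution $\vec{t}_{i,*}=\mathrm{sgn}(\alpha_{i})\vec{w}_{i}/|\alpha_{i}|$, the existence of $T$ with $r=2$ is equivalent to the existence of a closed polygon with edge lengths $|\alpha_{1}|,\dots,|\alpha_{n}|$, which holds iff the largest edge is at most the sum of the others---i.e.\ \eqref{non-dominanceof_alpha}. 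Your route is shorter and more geometric, and it makes transparent that $r=2$ always suffices; the paper's construction is more explicit (one can read off the $T$ entries) and yields a rank-$(n-1)$ witness except at the boundary, a fact they exploit in Lemma~\ref{3lem}. You also make explicit two points the paper leaves implicit in this proof: that $I(D^{*})=\emptyset$ so the $\mu_{j}$-terms in \eqref{13} vanish, and the appeal to uniqueness of the CMTFA optimum to upgrade ``a solution'' to ``the solution''.
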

According to this theorem the CMTFA solution to a star connected network is a star itself,  if and only if   the  elements of  vector $ \vec{\alpha} $  satisfy equation \eqref{non-dominanceof_alpha}.

Before we move to the proof of the Theorem, it is worthwhile to mention that the statement of Theorem  \ref{1th} was proven in Theorem $ 3.4 $ of \cite{saunderson2012diagonal} for MTFA to recover a star structure, we prove the Theorem for CMTFA. In addition to that, we go onto find explicit condition and solution for the case when CMTFA does not recover a star structure as presented in the next section, which was not addressed in \cite{saunderson2012diagonal}. 
\begin{proof}[\textbf{Proof of Theorem \ref{1th}:}]
We recall the necessary and sufficient condition for CMTFA solution set by Theorem \ref{necessary and sufficient}. Since, $ \Sigma_{t,ND} $ in rank $ 1 $, its minimum eigenvalue is $ 0 $. To complete the proof of the Theorem, we only need to show the existance of rank $ n-1 $ matrix $ T$ such that the column vectors of $ T$ are in the null space of $ \Sigma_{t,ND} $ and the  $ L_{2} $-norm square of each row of $ T $ is $ 1 $. 

Lemma \ref{1lem} has already shown that, for the existence of such $ T $ non-dominance of vector $ \vec{\alpha}$ given by equation \eqref{non-dominanceof_alpha} is a necessary condition. Next we show that non-dominace is also a sufficient condition by constructing such a matrix $ T $ under the assumption of non-dominance of $ \vec{\alpha} $, and that should complete the proof of Theorem \ref{1th}. 

Its trivial to find  the following  basis vectors for the null space of  $ \Sigma_{t,ND} $,
\begin{align}
v_{1}=\begin{bmatrix}
-\frac{\alpha_{2}}{\alpha_{1}}\\
1\\
0\\
\vdots\\
0
\end{bmatrix}, v_{2}=\begin{bmatrix}
-\frac{\alpha_{3}}{\alpha_{1}}\\
0\\
1\\
\vdots\\
0
\end{bmatrix},\dots, \quad v_{n-1}=\begin{bmatrix}
-\frac{\alpha_{n}}{\alpha_{1}}\\
0\\
0\\
\vdots\\
1
\end{bmatrix}
\end{align}
We define matrix $ V $ so that the column vecors of $ V $ span the null space of $ \Sigma_{t,ND} $,
\begin{align}\label{v}
&V\notag\\
&= \begin{bmatrix}
-\frac{\alpha_{2}}{\alpha_{1}}&\dots&-\frac{\alpha_{n}}{\alpha_{1}}&-\left(c_{2}\frac{\alpha_{2}}{\alpha_{1}}+\dots+c_{n}\frac{\alpha_{n}}{\alpha_{1}}\right)\\
1&\dots&0&c_{2}\\
0&\dots&0&c_{3}\\
\vdots&\ddots&\vdots&\vdots\\
0&\dots&1&c_{n}
\end{bmatrix}
\end{align} 
where $c_{j}\in \{ 1, -1\}, \quad 2\leq j \leq n $. It will suffice for us to show the existance of  $ \{c_{j}\}, \quad 2\leq j \leq n $ and a diagonal matrix $ B $ such that the following holds. 
\begin{align}\label{1t}
T= V \cdot B
\end{align}
where, $ L_{2} $-norm square of each row of $ T $ is $ 1 $. Using \eqref{1t}, 
\begin{align}\label{1tt}
TT'= VBB'V'
\end{align}
We define the symmetric matrix $ \beta=BB' $, and  the diagonal  matrix $ \beta $ has only non-negative entries. Since we want each diagonal element of $ TT' $ to be $ 1 $, we have the following $ n $ equations,

\begin{align}\label{e1}
&\frac{\alpha_{2}^{2}}{\alpha_{1}^{2}}\beta_{11}+\frac{\alpha_{3}^{2}}{\alpha_{1}^{2}}\beta_{22}+\dots+\frac{\alpha_{n}^{2}}{\alpha_{1}^{2}}\beta_{n-1,n-1}+\notag\\
&\left(c_{2}\frac{\alpha_{2}}{\alpha_{1}}+c_{3}\frac{\alpha_{3}}{\alpha_{1}}+\dots+c_{n}\frac{\alpha_{n}}{\alpha_{1}}\right)^{2}\beta_{nn}=1
\end{align}
\begin{align}\label{e2}
\beta_{ii}+c_{i+1}^{2}\beta_{nn}=1, \quad i=1,\dots,n-1
\end{align}
Solving \eqref{e1}, we get, 
\begin{align}
&\beta_{nn}=
\frac{\alpha_{1}^{2}-\alpha_{2}^{2}-\alpha_{3}^{2}-\dots-\alpha_{n}^{2}}{\sum_{i\neq j, i\neq 1, j\neq 1}c_{i}c_{j}\alpha_{i}\alpha_{j}}
\end{align}
Equation \eqref{e2} indicates that, to ensure that all the diagonal entries of $ \beta $ are  non-negative, we need $ \beta_{nn}\leq 1 $. We select $ c_{i}, 2\leq i \leq n $ such that, 

\begin{align}
c_{i}\alpha_{i}=|\alpha_{i}|, \quad i=2,\dots, n
\end{align}

Under such selection of $ c_{i}, 2\leq i \leq n $, we have
\begin{align}
\beta_{nn}=\frac{\alpha_{1}^{2}-\alpha_{2}^{2}-\dots-\alpha_{n}^{2}}{\sum_{i\neq j, i\neq 1, j\neq 1}|\alpha_{i}||\alpha_{j}|}
\end{align}
Using the non-dominance assumption given in \eqref{non-dominanceof_alpha}, we have
\begin{align}
&\alpha_{1}^{2}\leq \left(\sum_{i=2}^{n}|\alpha_{i}|\right)^{2}\notag\\
\Rightarrow &\frac{\alpha_{1}^{2}-\sum_{i=2}^{n}\alpha_{i}^2}{\sum_{i\neq j, i\neq 1, j\neq 1}|\alpha_{i}||\alpha_{j}|}\leq 1\label{1bc}\\
\Rightarrow & \beta_{nn}\leq 1
\end{align}
Hence, non-dominance of vector $ \vec{\alpha} $ is a sufficient condition to construct the kind of $ T $ matrix required by a star structured CMTFA solution of $ \Sigma_{x} $. That completes the proof of Theorem \ref{1th}.
\end{proof}
\subsubsection*{CMTFA Boundary Case}
It is obvious that, there might be numerous ways to construct the matrix $ T $ that satisfy the requirements set by Theorem \ref{necessary and sufficient}.  Because of the special way we constructed the matrix $ T $, the rank of $ T $ under the non-dominant case is $ n-1 $ except for a very special case.  Here we talk about that special case of non-dominance i.e. when \eqref{non-dominanceof_alpha} holds for equality, then the rank of $ T $ is always $ 1 $ irrespective of the way we construct $T $ . For any given $ n $, it is straightforward to see from equation \eqref{1bc} that,  for  $ |\alpha_{1}|= \sum_{i=2}^{n}|\alpha_{i}| $ we have $ \beta_{nn}=1 $. Plugging $ \beta_{nn}=1 $ in equation \eqref{e2} gives us $ \beta_{ii}=0, 1\leq i \leq n-1$. Equations \eqref{1t} and \eqref{1tt} imply that, such a $ \beta $ matrix will produce a rank $ 1 $ matrix $ T $. This very special case is analytically explained by the next Lemma.
\begin{lem}\label{3lem}
When the non-dominance condition given in \eqref{non-dominanceof_alpha} holds for equality,
any $ n\times r $ matrix $ T $ such that 
$ \vec{t}_{*,i}\in N(\Sigma_{t,ND}), \quad 1\leq i \leq r $  and $ ||\vec{t}_{j,*}||^{2}=1, \quad 1\leq j \leq n $ has to be a rank $ 1 $ matrix. 
\end{lem}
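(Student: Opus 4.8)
The plan is to run the same chain of implications used in the proof of Lemma~\ref{1lem}, but now to track precisely when each inequality is tight, and then to invoke the equality case of the triangle inequality. Observe first from \eqref{sigmatnd} that $\Sigma_{t,ND}=\vec\alpha\,\vec\alpha'$ has rank one, so $N(\Sigma_{t,ND})$ is exactly the hyperplane $\{\vec v\in\mathcal R^{n}:\vec\alpha'\vec v=0\}$. Hence the hypothesis $\vec t_{*,i}\in N(\Sigma_{t,ND})$ for every $i$ is equivalent to $\vec\alpha' T=\vec 0'$, i.e. $\sum_{i=1}^{n}\alpha_i\vec t_{i,*}=\vec 0$, which I rewrite as $\alpha_1\vec t_{1,*}=-\sum_{i=2}^{n}\alpha_i\vec t_{i,*}$.

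Taking $L_2$ norms, using $\|\vec t_{j,*}\|=1$ for every $j$, and applying the triangle inequality gives
\[
|\alpha_1|=\big\|\alpha_1\vec t_{1,*}\big\|=\Big\|\sum_{i=2}^{n}\alpha_i\vec t_{i,*}\Big\|\le\sum_{i=2}^{n}|\alpha_i|\,\|\vec t_{i,*}\|=\sum_{i=2}^{n}|\alpha_i|.
\]
When \eqref{non-dominanceof_alpha} holds with equality, i.e. $|\alpha_1|=\sum_{i=2}^{n}|\alpha_i|$, the two ends of this chain coincide, so the triangle inequality $\big\|\sum_{i=2}^{n}\alpha_i\vec t_{i,*}\big\|\le\sum_{i=2}^{n}\|\alpha_i\vec t_{i,*}\|$ must be an equality. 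Next I would use the equality case of the triangle inequality in Euclidean space: if $\vec v_2,\dots,\vec v_n$ are all nonzero and $\|\vec v_2+\dots+\vec v_n\|=\sum_{i=2}^{n}\|\vec v_i\|$, then there is a single unit vector $\vec u$ and scalars $\lambda_i>0$ with $\vec v_i=\lambda_i\vec u$ (proved by induction from the two-term case, which follows from the tightness condition of Cauchy--Schwarz). Taking $\vec v_i=\alpha_i\vec t_{i,*}$ (nonzero since $\alpha_i\neq0$ and $\|\vec t_{i,*}\|=1$) gives $\alpha_i\vec t_{i,*}=\lambda_i\vec u$, and $\|\vec t_{i,*}\|=1$ forces $\lambda_i=|\alpha_i|$, hence $\vec t_{i,*}=\mathrm{sgn}(\alpha_i)\,\vec u$ for $i=2,\dots,n$. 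Substituting back, $\alpha_1\vec t_{1,*}=-\sum_{i=2}^{n}|\alpha_i|\,\vec u=-|\alpha_1|\vec u$, so $\vec t_{1,*}=-\mathrm{sgn}(\alpha_1)\,\vec u$. Thus every row of $T$ equals $\pm\vec u$, its row space is the line $\mathcal R\vec u$, and since the rows are unit vectors this is not the zero space; therefore $\mathrm{rank}(T)=1$, which proves Lemma~\ref{3lem}.

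I expect the only delicate point to be the clean statement and use of the equality characterization of the multi-term triangle inequality (together with the small observation that no $\vec v_i$ can vanish here); the remainder is bookkeeping, and in particular is independent of $r$, so it covers every admissible $T$. Should citing that characterization feel awkward, an equivalent route is to expand $\big\|\sum_{i=2}^{n}\alpha_i\vec t_{i,*}\big\|^{2}$ and compare it term by term with $\big(\sum_{i=2}^{n}|\alpha_i|\big)^{2}$: equality then reduces to $\langle\mathrm{sgn}(\alpha_i)\vec t_{i,*},\mathrm{sgn}(\alpha_j)\vec t_{j,*}\rangle=1$ for all distinct $i,j\in\{2,\dots,n\}$, and combining this with $\|\vec t_{i,*}\|=1$ yields $\mathrm{sgn}(\alpha_i)\vec t_{i,*}=\mathrm{sgn}(\alpha_j)\vec t_{j,*}$, i.e. the common direction $\vec u$ again.
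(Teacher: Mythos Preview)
Your argument is correct and follows essentially the same route as the paper's proof: derive $\alpha_1\vec t_{1,*}=-\sum_{i\ge 2}\alpha_i\vec t_{i,*}$ from the null-space condition, take norms, and use that the boundary assumption forces equality in the triangle inequality so all rows are collinear. You have in fact spelled out the equality case of the triangle inequality more carefully than the paper does (it simply asserts that equality ``implies that all the $\alpha_i\vec t_{i,*}$ act in the same line''), so no changes are needed.
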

\begin{proof}[\textbf{Proof of Lemma \ref{3lem}:}]
Using the orthogonality between $ \Sigma_{t,ND} $ and its null space matrix $ T $,
\begin{align}\label{ortho}
\sum_{i=1}^{n}\alpha_{i}\vec{t}_{i,*}=\vec{0}
\end{align}
Equation \eqref{ortho} implies the following two things:
\begin{align}
&||\alpha_{1}\vec{t}_{1,*}||=||\sum_{i=2}^{n}\alpha_{i}\vec{t}_{i,*}||\label{1imp}\\
&\alpha_{1}\vec{t}_{1,*}  =-\sum_{i=2}^{n}\alpha_{i}\vec{t}_{i,*}\label{2imp}
\end{align}
Using the triangular inequality,
\begin{align}\label{tri}
||\sum_{i=2}^{n}\alpha_{i}\vec{t}_{i,*}||&\leq \sum_{i=2}^{n}||\alpha_{i}\vec{t}_{i,*}||
\end{align}
We segregate the inequality in \eqref{tri} in two parts. The first part has,
\begin{align}
||\sum_{i=2}^{n}\alpha_{i}\vec{t}_{i,*}||&< \sum_{i=2}^{n}||\alpha_{i}\vec{t}_{i,*}||\notag\\
&=||\vec{t}_{1,*}||\sum_{i=2}^{n}|\alpha_{i}|=|\alpha_{1}|||\vec{t}_{1,*}||=||\alpha_{1}\vec{t}_{1,*}||\notag
\end{align}
which violates \eqref{1imp} hence orthogonality.
And the second part has,
\begin{align}
||\sum_{i=2}^{n}\alpha_{i}\vec{t}_{i,*}||&= \sum_{i=2}^{n}||\alpha_{i}\vec{t}_{i,*}||\notag\\
&=||\vec{t}_{1,*}||\sum_{i=2}^{n}|\alpha_{i}|=|\alpha_{1}|||\vec{t}_{1,*}||=||\alpha_{1}\vec{t}_{1,*}||\notag
\end{align}
which implies that all the $ \alpha_{i}\vec{t}_{i,*}, 1\leq i \leq n $ act  in the same line. Equivalently,  matrix $ T $ becomes a rank $ 1 $ matrix.
\end{proof}
\section{Dominant Case}
Having proved that CMTFA solution of $ \Sigma_{x} $ recovers a star structure only under the non-dominance of vector $ \vec{\alpha} $, in this section we explore  the CMTFA solution space  under the dominant case i.e.
\begin{align}\label{dominant_alpha}
|\alpha_{1}|>\sum_{i=2}^{n}|\alpha_{i}|
\end{align}
We show next that   under a dominant vector $ \vec{\alpha} $, CMTFA solution for $ \Sigma_{x} $ is a rank $ n-1 $ matrix given by \eqref{sigmatdm}. That means, CMTFA solution of $ \Sigma_{x} $ can either be rank $ 1 $ or rank $ n-1 $, nothing in between.  
 \begin{align}\label{sigmatdm}
&\Sigma_{t,DM}\notag\\
&=\begin{bmatrix}
(\Sigma_{t,DM})_{11}&\alpha_{1}\alpha_{2}&\dots&\alpha_{1}\alpha_{n}\\
\alpha_{2}\alpha_{1}&(\Sigma_{t,DM})_{22}&\dots&\alpha_{2}\alpha_{n}\\
\vdots&\vdots&\ddots&\vdots\\
\alpha_{n}\alpha_{1}&\alpha_{n}\alpha_{2}&\dots&(\Sigma_{t,DM})_{nn}\\
\end{bmatrix}
\end{align}
where
\begin{align}
&(\Sigma_{t,DM})_{11}=|\alpha_{1}|\left(\sum_{i\neq 1}|\alpha_{i}|\right)\notag\\
&(\Sigma_{t,DM})_{ii}=|\alpha_{i}|\left(|\alpha_{1}|-\sum_{j\neq i,1}|\alpha_{j}|\right),  i=2,\dots, n\notag
\end{align} 
Understanding the next two Lemmas will prepare us for the Theorem to follow.
\begin{lem}\label{l1}
$ \Sigma_{t,DM} $ is a rank $ n-1 $ matrix. 
\end{lem}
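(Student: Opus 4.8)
The plan is to reduce $\Sigma_{t,DM}$ to a normalized matrix and then read off its rank from a rank-one perturbation structure. First I would absorb the signs of the $\alpha_i$: put $s_i:=|\alpha_i|>0$ and let $E:=\mathrm{diag}(\mathrm{sign}\,\alpha_1,\dots,\mathrm{sign}\,\alpha_n)$, an invertible diagonal matrix. Conjugating \eqref{sigmatdm} by $E$ leaves the diagonal entries unchanged (they depend only on the $|\alpha_i|$) and turns each off-diagonal $\alpha_i\alpha_j$ into $s_is_j$, so $M:=E\,\Sigma_{t,DM}\,E$ satisfies $M_{ij}=s_is_j$ for $i\ne j$, $M_{11}=s_1\sum_{i\ne1}s_i$, $M_{ii}=s_i\big(s_1-\sum_{j\ne i,1}s_j\big)$ for $i\ge2$, and $\mathrm{rank}\,\Sigma_{t,DM}=\mathrm{rank}\,M$. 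Next, since every $s_i>0$, the matrix $D:=\mathrm{diag}(s_1,\dots,s_n)$ is invertible, and $N:=D^{-1}MD^{-1}$ has the same rank as $M$. Using the identity $\sum_{j\ne i,1}s_j=\big(\sum_{j\ge2}s_j\big)-s_i$ one computes directly that every off-diagonal entry of $N$ is $1$, that $N_{11}=1-\delta/s_1$, and that $N_{ii}=1+\delta/s_i$ for $i\ge2$, where $\delta:=s_1-\sum_{i=2}^n s_i$; in other words
\[
N=\vec{1}\vec{1}'+\delta\,\mathrm{diag}\!\Big(-\tfrac{1}{s_1},\tfrac{1}{s_2},\dots,\tfrac{1}{s_n}\Big),
\]
and the dominance hypothesis \eqref{dominant_alpha} is exactly the statement $\delta>0$.

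For the upper bound $\mathrm{rank}\,M\le n-1$ I would exhibit an explicit null vector. Take $\vec v:=(1,-1,\dots,-1)'$; then $D\vec v=(s_1,-s_2,\dots,-s_n)'$, so $\vec 1'(D\vec v)=s_1-\sum_{i\ge2}s_i=\delta$ and $\mathrm{diag}(-1/s_1,1/s_2,\dots,1/s_n)(D\vec v)=-\vec 1$, whence $N(D\vec v)=\delta\vec 1-\delta\vec 1=\vec 0$ and therefore $M\vec v=\vec 0$. (Equivalently, the matrix determinant lemma applied to the rank-one update $N$ gives $\det N=0$.) For the lower bound $\mathrm{rank}\,M\ge n-1$ I would delete the first row and the first column of $M$: the resulting $(n-1)\times(n-1)$ principal submatrix equals $D'N'D'$ with $D'=\mathrm{diag}(s_2,\dots,s_n)$ and $N'=\vec 1\vec 1'+\delta\,\mathrm{diag}(1/s_2,\dots,1/s_n)$, where here $\vec 1$ denotes the $(n-1)$-vector of ones. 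For any nonzero $\vec x$ one has $\vec x'N'\vec x=(\vec 1'\vec x)^2+\delta\sum_{i\ge2}x_i^2/s_i>0$ because $\delta>0$ and the $s_i>0$, so $N'$ is positive definite, hence this principal submatrix of $M$ is nonsingular. Combining the two bounds gives $\mathrm{rank}\,\Sigma_{t,DM}=\mathrm{rank}\,M=n-1$.

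The one place where care is needed --- and essentially the only substantive point --- is that the conclusion genuinely requires the strict inequality in \eqref{dominant_alpha}: it is what forces $\delta>0$, which in turn is what makes the chosen $(n-1)\times(n-1)$ minor nonzero. If instead $\delta=0$ (the boundary case, where \eqref{non-dominanceof_alpha} holds with equality), the submatrix $N'$ collapses to the rank-one matrix $\vec 1\vec 1'$ and $M$ drops to rank $1$, which is consistent with Lemma \ref{3lem}. Apart from flagging this, the argument is a routine verification once the uniform simplification of the diagonal of $N$ is noticed; no deeper structural fact about $\Sigma_x$ is required.
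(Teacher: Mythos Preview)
Your argument is correct, and in fact more complete than the paper's. The paper's proof of Lemma~\ref{l1} takes a different, purely computational route: it writes $\alpha_i=\gamma_i|\alpha_i|$ with $\gamma_i\in\{-1,1\}$ and then verifies, column by column, that $(\Sigma_{t,DM})_{1,*}=\sum_{g=2}^{n}\gamma_1\gamma_g\,(\Sigma_{t,DM})_{g,*}$, i.e.\ that the first row is a $\pm1$-signed combination of the remaining rows. This is exactly the linear dependence encoded by your null vector $E\vec v$ (indeed $E\vec v$ is, up to an overall sign, the vector $\mathbf{\Phi}$ that the paper constructs separately in Lemma~\ref{l2}). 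The paper then simply declares the rank to be $n-1$, without explicitly supplying a lower bound; your positive-definiteness argument for the $(n-1)\times(n-1)$ principal minor---which is where the strict dominance $\delta>0$ genuinely enters---fills that gap cleanly. The tradeoff is that the paper's calculation is entirely elementary and stays in the original coordinates, while your sign-and-scale normalization $N=D^{-1}E\,\Sigma_{t,DM}\,ED^{-1}=\vec 1\vec 1'+\delta\,\mathrm{diag}(-1/s_1,1/s_2,\dots,1/s_n)$ exposes the rank-one-plus-diagonal structure transparently and yields both bounds at once.
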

\begin{proof}[\textbf{Proof of Lemma \ref{l1}:}]
Let $ \gamma_{i}\in \{-1 , 1\} $ be the sign of $ \alpha_{i} $, i.e. $ \alpha_{i}=\gamma_{i}|\alpha_{i}| $.

For the $ 1 $st column of $ \Sigma_{t,DM} $,
\begin{align}
\sum_{g=2}^{n} \gamma_{1} \gamma_{g}(\Sigma_{t,DM})_{g1}&=\sum_{g=2}^{n} \gamma_{1} \gamma_{g} \gamma_{1} \gamma_{g}|\alpha_{g}||\alpha_{1}|\notag\\
&=\sum_{g=2}^{n} |\alpha_{g}||\alpha_{1}|\notag\\
&=|\alpha_{1}|\left(\sum_{g=2}^{n}|\alpha_{g}|\right)=(\Sigma_{t,DM})_{11}\notag
\end{align}
For the $ h $th $ (h\neq 1) $ column  of $ \Sigma_{t,DM} $,
\begin{align}
\sum_{g=2}^{n} \gamma_{1} \gamma_{g} (\Sigma_{t,DM})_{gh}
=&\gamma_{1} \gamma_{h}|\alpha_{1}||\alpha_{h}|-\sum_{m\neq h,1} \gamma_{1} \gamma_{h}|\alpha_{h}||\alpha_{m}|\notag\\
&+\sum_{m\neq h,1} \gamma_{1} \gamma_{m} \gamma_{m} \gamma_{h}|\alpha_{h}||\alpha_{m}|\notag\\
=&\gamma_{1} \gamma_{h}|\alpha_{1}||\alpha_{h}|-\sum_{m\neq h,1} \gamma_{1} \gamma_{h}|\alpha_{h}||\alpha_{m}|\notag\\
&+\sum_{m\neq h,1} \gamma_{1} \gamma_{h}|\alpha_{h}||\alpha_{m}|\notag\\
=&\gamma_{1} \gamma_{h}|\alpha_{1}||\alpha_{h}|=(\Sigma_{t,DM})_{1h}\notag
\end{align}
Combining the above two results,
\begin{align}
&(\Sigma_{t,DM})_{1,*}=\sum_{g=2}^{n} \gamma_{1} \gamma_{g}(\Sigma_{t,DM})_{g,*}\notag\\
& \Rightarrow(\Sigma_{t,DM})_{1,*}-\sum_{g=2}^{n} \gamma_{1} \gamma_{g}(\Sigma_{t,DM})_{g,*}=0\notag
\end{align}
Since the value of $ \gamma_{1} \gamma_{g} \in \{1,-1 \}, 2\leq g \leq n $, the above equation suggests that there exist nonzero coefficients $S_{g}\in \{1,-1\}  $ such that $\sum_{g=1}^{n}S_{g}(\Sigma_{t,DM})_{g,*}=0  $. Hence, we can conclude that the matrix $ \Sigma_{t,DM} $ is rank $ n-1 $. 
 \end{proof} 
\begin{lem}\label{l2}
There exists a column vector $\mathbf{\Phi}=[\Phi_{1}, \Phi_{2}, .... , \Phi_{n}]' $ such that $ \Sigma_{t,DM}\mathbf{\Phi}=0$, where $ \Phi_{i}\in \{-1, 1\}, 1\leq i \leq n $.
\end{lem}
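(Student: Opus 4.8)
The plan is to obtain $\mathbf{\Phi}$ directly from the work already done in the proof of Lemma \ref{l1} by invoking symmetry of $\Sigma_{t,DM}$. In that proof it was shown, with $\gamma_i\in\{-1,1\}$ the sign of $\alpha_i$, that the rows of $\Sigma_{t,DM}$ satisfy $(\Sigma_{t,DM})_{1,*}=\sum_{g=2}^{n}\gamma_1\gamma_g(\Sigma_{t,DM})_{g,*}$, i.e. $\sum_{g=1}^{n}S_g(\Sigma_{t,DM})_{g,*}=\vec{0}$ with $S_1=1$ and $S_g=-\gamma_1\gamma_g$ for $g\geq 2$. This is exactly the statement $S'\Sigma_{t,DM}=\vec{0}'$, and since $\Sigma_{t,DM}$ is symmetric we get $\Sigma_{t,DM}S=\vec{0}$ as well. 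Setting $\mathbf{\Phi}=S=[1,-\gamma_1\gamma_2,\dots,-\gamma_1\gamma_n]'$ then does the job, and every entry lies in $\{-1,1\}$ by construction. So the proof reduces to a one-line transposition argument once Lemma \ref{l1} is in hand.

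Alternatively, to keep the lemma self-contained I would just exhibit the candidate $\mathbf{\Phi}=[\gamma_1,-\gamma_2,-\gamma_3,\dots,-\gamma_n]'$ (a scalar multiple of the vector above) and verify $\Sigma_{t,DM}\mathbf{\Phi}=\vec{0}$ coordinate by coordinate. For the first coordinate this is $|\alpha_1|\gamma_1\sum_{i\neq 1}|\alpha_i|-\gamma_1\sum_{j=2}^{n}|\alpha_1||\alpha_j|=0$. For the $h$th coordinate with $h\neq 1$, the off-diagonal column-$1$ term contributes $\gamma_h|\alpha_h||\alpha_1|$, the diagonal term $(\Sigma_{t,DM})_{hh}$ contributes $-\gamma_h|\alpha_h||\alpha_1|+\gamma_h|\alpha_h|\sum_{j\neq h,1}|\alpha_j|$, and the remaining off-diagonal terms contribute $-\gamma_h|\alpha_h|\sum_{j\neq h,1}|\alpha_j|$; these cancel in pairs. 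The computation is short and mirrors the algebra already carried out in Lemma \ref{l1}, the point being that the diagonal entries $(\Sigma_{t,DM})_{ii}$ were defined precisely so that these cancellations occur.

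I do not anticipate a real obstacle: the only care needed is bookkeeping of the signs $\gamma_i$ and substituting the correct diagonal entries. It is worth noting that the dominance hypothesis \eqref{dominant_alpha} plays no role in this lemma — it is needed elsewhere only to guarantee $(\Sigma_{t,DM})_{ii}\geq 0$ — so the statement holds for any sign pattern of $\vec{\alpha}$. In the writeup I would lead with the symmetry argument (which also re-derives Lemma \ref{l1}) and optionally append the direct verification as a sanity check.
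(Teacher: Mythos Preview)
Your proposal is correct. The vector you produce is in fact the same one the paper uses: the paper sets $\Phi_1=1$ and $\Phi_i=-1$ iff $\alpha_1\alpha_i>0$ for $i\neq 1$, which is exactly $\Phi_i=-\gamma_1\gamma_i$, i.e.\ your $S$. The difference is in how the annihilation $\Sigma_{t,DM}\mathbf{\Phi}=0$ is established.

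The paper verifies this directly and row by row: it first checks orthogonality to the first row, then for a generic row $g\neq 1$ expands $(\Sigma_{t,DM})_{g}\mathbf{\Phi}$ and kills the two resulting sums via a four-way case analysis on whether $\Phi_g=\Phi_h$ and whether $\Phi_g=\Phi_1$. Your primary route is genuinely different and shorter: you observe that the row identity already proved in Lemma~\ref{l1} reads $S'\Sigma_{t,DM}=\vec{0}'$, and then invoke symmetry of $\Sigma_{t,DM}$ to transpose. This buys you the result in one line and, as you note, makes Lemma~\ref{l1} do double duty. Your alternative direct verification is also cleaner than the paper's, since writing $\alpha_i=\gamma_i|\alpha_i|$ and using $\gamma_i^2=1$ avoids the case split entirely. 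Your side remark that the dominance hypothesis is not used in the lemma itself is accurate.
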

This Lemma basically refers to the construction of the one dimensional null space required by Theorem \ref{necessary and sufficient} for a rank $ n-1 $ CMTFA solution of $ \Sigma_{x} $. 
\begin{proof}[\textbf{Proof of Lemma \ref{l2}:}]
It is obvious to see that the following selection of the elements of vector  $ \mathbf{\Phi} $ makes it orthogonal to $ (\Sigma_{t,DM})_{1} $, i.e. $ (\Sigma_{t,DM})_{1}\mathbf{\Phi}=0 $. Where $ (\Sigma_{t,DM})_{1} $ is the $ 1 $st row of $ \Sigma_{t,DM} $.
\begin{equation*}
    \Phi_{i}=
    \begin{cases}
      -1, & \alpha_{1} \alpha_{i}>0, i\neq 1\\
    1, & otherwise\\
        \end{cases}
  \end{equation*}
Now it will be sufficient to prove that any vector $ \mathbf{\Phi} $ orthogonal to $ (\Sigma_{t,DM})_{1} $ is also orthogonal to all the other rows of $ \Sigma_{t,DM} $, i.e. $ (\Sigma_{t,DM})_{i}\mathbf{\Phi}=0, 2\leq i\leq n $. 

Let $ \gamma_{i}\in \{-1 , 1\} $ be the sign of $ \alpha_{i} $, i.e. $ \alpha_{i}=\gamma_{i}|\alpha_{i}| $

Now for any row $ g $, $ g\neq 1 $,
\begin{align}
(\Sigma_{t,DM})_{g}\mathbf{\Phi}&=\Phi_{g}(\Sigma_{t,DM})_{gg}+\sum_{g\neq h} \Phi_{h}(\Sigma_{t,DM})_{gh}\notag\\
&=\Phi_{g} |\alpha_{g}|\left(|\alpha_{1}|-\sum_{i\neq g,1}|\alpha_{i}|\right)\notag\\
&+\sum_{g\neq h} \Phi_{h}\alpha_{g}\alpha_{h}\notag
\end{align}
\begin{align}
&=\Phi_{g} |\alpha_{g}||\alpha_{1}|+\Phi_{1} \alpha_{g}\alpha_{1}\notag\\
&-\sum_{i\neq g, i\neq 1} \Phi_{g}|\alpha_{g}||\alpha_{i}|+\sum_{h\neq g, h\neq 1} \Phi_{h}\alpha_{g}\alpha_{h}\notag\\
&=(\Phi_{g} +\Phi_{1}\gamma_{g}\gamma_{1})| \alpha_{g}||\alpha_{1}|\notag\\
&+\sum_{h\neq g, h\neq 1} (\gamma_{g}\gamma_{h}\Phi_{h}-\Phi_{g})|\alpha_{g}||\alpha_{h}|\label{3aa}
\end{align}

If $ \Phi_{g}=\Phi_{h} \Rightarrow \gamma_{1}\gamma_{g}=\gamma_{1}\gamma_{h} \Rightarrow \gamma_{g}= \gamma_{h}\Rightarrow \gamma_{g}\gamma_{h}\Phi_{h}-\Phi_{g}=0 $. \\
Else if $ \Phi_{g}\neq \Phi_{h} \Rightarrow \gamma_{1}\gamma_{g}\neq \gamma_{1}\gamma_{h} \Rightarrow \gamma_{g}\neq \gamma_{h}\Rightarrow \gamma_{g}\gamma_{h}\Phi_{h}-\Phi_{g}=0 $.\\
Similarly, \
If $ \Phi_{g}=\Phi_{1}\Rightarrow \alpha_{1}\alpha_{g}<0 \Rightarrow \gamma_{1}\neq \gamma_{g}\Rightarrow \Phi_{g} +\Phi_{1}\gamma_{g}\gamma_{1} =0$\\
Else if $ \Phi_{g}\neq \Phi_{1}\Rightarrow \alpha_{1}\alpha_{g}>0 \Rightarrow \gamma_{1}= \gamma_{g}\Rightarrow \Phi_{g} +\Phi_{1}\gamma_{g}\gamma_{1} =0$\\
Plugging these results in  equation \eqref{3aa}, we get
\begin{align}
(\Sigma_{t,DM})_{g}\mathbf{\Phi}=0\notag
\end{align}
And that completes the proof.
\end{proof}
Having proved the two Lemmas, we are now well equipped to state and prove  Theorem  \ref{2th}.
\begin{thm}\label{2th}
$ \Sigma_{t,DM} $ given by equation \eqref{sigmatdm} is the CMTFA solution of $ \Sigma_{x} $ if and only if $ \vec{\alpha} $ is dominant. 
\end{thm}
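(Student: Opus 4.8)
The plan is to run the candidate decomposition $\Sigma_{x}=(\Sigma_{x}-\Sigma_{t,DM})+\Sigma_{t,DM}$ through the optimality characterization of Theorem \ref{necessary and sufficient}, and to obtain the converse from Theorem \ref{1th}. For sufficiency, assume $\vec{\alpha}$ is dominant and put $D^{*}:=\Sigma_{x}-\Sigma_{t,DM}$. Because the off-diagonal entries of $\Sigma_{t,DM}$ coincide with those of $\Sigma_{x}$, the matrix $D^{*}$ is diagonal with $D^{*}_{ii}=1-(\Sigma_{t,DM})_{ii}$; using \eqref{dominant_alpha} one checks $0<(\Sigma_{t,DM})_{ii}<1$ for every $i$ (for $i=1$, $(\Sigma_{t,DM})_{11}=|\alpha_{1}|\sum_{j\neq1}|\alpha_{j}|<\alpha_{1}^{2}<1$; for $i\geq2$, dominance gives $0<|\alpha_{1}|-\sum_{j\neq i,1}|\alpha_{j}|<|\alpha_{1}|$, so $0<(\Sigma_{t,DM})_{ii}<|\alpha_{i}||\alpha_{1}|<1$). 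Hence $D^{*}_{ii}>0$ for all $i$, so $I(D^{*})=\emptyset$ and condition \eqref{13} collapses to the single requirement $\vec{1}=\sum_{i}\vec{t}_{*,i}^{2}$.

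The crux is to prove $\Sigma_{t,DM}\succeq0$. Conjugating by the sign matrix $\Gamma=\mathrm{diag}(\gamma_{1},\dots,\gamma_{n})$ (where $\alpha_{i}=\gamma_{i}|\alpha_{i}|$) replaces $\Sigma_{t,DM}$ by the matrix with the same diagonal and all off-diagonal entries $|\alpha_{i}||\alpha_{j}|$, so it suffices to treat $\alpha_{i}>0$. With $\mu:=\sum_{i\geq2}\alpha_{i}$ and $\delta:=\alpha_{1}-\mu>0$, expanding $\vec{v}'\Sigma_{t,DM}\vec{v}=(\vec{\alpha}'\vec{v})^{2}-\alpha_{1}\delta v_{1}^{2}+\delta\sum_{i\geq2}\alpha_{i}v_{i}^{2}$ and using the elementary identity $\mu\sum_{i\geq2}\alpha_{i}v_{i}^{2}-(\sum_{i\geq2}\alpha_{i}v_{i})^{2}=\sum_{2\leq i<j}\alpha_{i}\alpha_{j}(v_{i}-v_{j})^{2}$ yields the sum-of-squares representation
\[
\vec{v}'\Sigma_{t,DM}\vec{v}=\frac{\alpha_{1}}{\mu}\Big(\mu v_{1}+\sum_{i\geq2}\alpha_{i}v_{i}\Big)^{2}+\frac{\delta}{\mu}\sum_{2\leq i<j}\alpha_{i}\alpha_{j}(v_{i}-v_{j})^{2}\;\geq\;0 .
\]
By Lemma \ref{l1} a nontrivial $\pm1$-combination of the rows of $\Sigma_{t,DM}$ vanishes, so $\Sigma_{t,DM}$ is singular and therefore $\lambda_{\min}(\Sigma_{t,DM})=0$. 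Finally Lemma \ref{l2} furnishes $\mathbf{\Phi}\in N(\Sigma_{t,DM})$ with all entries in $\{-1,1\}$; choosing $T$ to be the single column $\mathbf{\Phi}$ gives $\sum_{i}\vec{t}_{*,i}^{2}=\mathbf{\Phi}^{2}=\vec{1}$, so \eqref{13} holds with $I(D^{*})=\emptyset$. All hypotheses of Theorem \ref{necessary and sufficient} are met, hence $\Sigma_{x}-D^{*}=\Sigma_{t,DM}$ is the CMTFA solution.

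For necessity I would argue by contraposition: if $\vec{\alpha}$ is not dominant, Theorem \ref{1th} says the unique CMTFA solution of $\Sigma_{x}$ is the rank-one matrix $\Sigma_{t,ND}$, while Lemma \ref{l1} gives that $\Sigma_{t,DM}$ has rank $n-1$, so for $n\geq3$ the two cannot agree and $\Sigma_{t,DM}$ is not the CMTFA solution; in the borderline case $|\alpha_{1}|=\sum_{i\geq2}|\alpha_{i}|$ one checks that $\Sigma_{t,DM}$ collapses to exactly $\Sigma_{t,ND}$, which is precisely the transition between the two regimes. The only genuinely non-routine step in the whole argument is the positive semidefiniteness of $\Sigma_{t,DM}$: its diagonal correction $\delta\,\mathrm{diag}(-\alpha_{1},\alpha_{2},\dots,\alpha_{n})$ is indefinite, so positivity has to be harvested from the rank-one term $\vec{\alpha}\vec{\alpha}'$ exactly through the sum-of-squares identity displayed above; the diagonal feasibility and the null-space certificate are immediate consequences of Lemmas \ref{l1} and \ref{l2}.
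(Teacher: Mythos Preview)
Your argument follows the same skeleton as the paper's proof: verify the optimality conditions of Theorem~\ref{necessary and sufficient} by checking $D^{*}_{ii}>0$, $\lambda_{\min}(\Sigma_{t,DM})=0$, and producing a null-space certificate from Lemma~\ref{l2}. Where you diverge is in rigor, and in two places you actually go beyond the paper. First, the paper's proof infers $\lambda_{\min}(\Sigma_{t,DM})=0$ solely from the fact that $\Sigma_{t,DM}$ has rank $n-1$ (Lemma~\ref{l1}); but rank $n-1$ for a symmetric matrix only forces \emph{one} eigenvalue to vanish, not the remaining $n-1$ to be nonnegative. Your sum-of-squares identity
\[
\vec{v}'\Sigma_{t,DM}\vec{v}=\frac{\alpha_{1}}{\mu}\Big(\mu v_{1}+\sum_{i\geq2}\alpha_{i}v_{i}\Big)^{2}+\frac{\delta}{\mu}\sum_{2\leq i<j}\alpha_{i}\alpha_{j}(v_{i}-v_{j})^{2}
\]
is a genuine addition that closes this gap cleanly; the weighted Lagrange identity you invoke for the second term is exactly the right tool. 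Second, the paper's proof of Theorem~\ref{2th} treats only the ``if'' direction and leaves necessity implicit, whereas you spell out the contrapositive via Theorem~\ref{1th} and uniqueness of the CMTFA optimum, and correctly note that on the boundary $|\alpha_{1}|=\sum_{i\geq2}|\alpha_{i}|$ the two candidate solutions coincide. So your route is the same in outline but strictly more complete in execution.
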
 
\begin{proof}[\textbf{Proof of Theorem \ref{2th}}]
To prove the Theorem we refer to necessary and sufficient condition set by Theorem \ref{necessary and sufficient}. Lemma \ref{l1} proves that the
rank of $ \Sigma_{t,DM} $ is $ n-1 $, so its minimum eigenvalue is $\lambda_{min}(\Sigma_{t,DM})= 0 $. Since $0< |\alpha_{i}|<1 $ and $ 0<(\Sigma_{t,DM})_{ii})<1, i=1,\dots, n $, all the diagonal entries $ D_{i,i}^{*}, 1\leq i \leq n $ of the matrix $ D^{*} $ are positive. As a result, the set $I(D^{*})$ is empty and the second term in the right hand side of \eqref{13} vanishes.

The dimension of the null space of $ \Sigma_{t,DM} $ is $ 1 $. It will suffice for us to prove the existence of a  column vector $ \mathbf{\Phi}_{n\times 1} $, $\Phi_{i} \in \{1,-1\}, 1\leq i \leq n $ such that $ \Sigma_{t,DM}\mathbf{\Phi}=0 $. Lemma \ref{l2} gives that proof. 
\end{proof}
So, the outcome of the section is that when the CMTFA solution of $ \sigma_{x} $  does not end up with a rank $ 1 $ solution i.e. does not recover a star structure, it ends up with a rank $ n-1 $ solution and this happens if and only if $ \vec{\alpha} $ is dominant. 
\section{Numerical results}
In this section we present some numerical data along with their analytical insights to show how much of an advantage does the optimal solution for CMTFA dominant case  give over the naive adoption of star solution. It is straightforward to quantify such advantage using equations \eqref{sigmatnd} and \eqref{sigmatdm}, in terms the difference of trace between the two solutions. 
\begin{figure}
\centering
\includegraphics[scale=0.5]{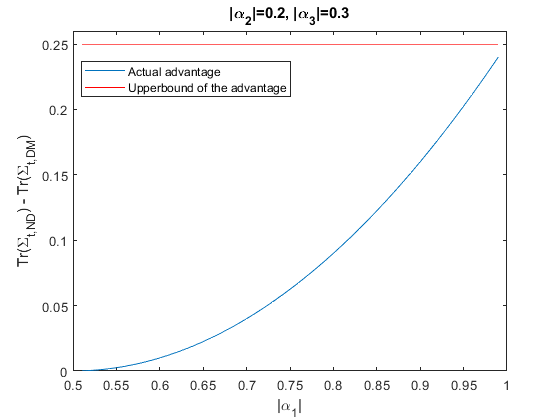}
\caption{Atvantage of the optimum solution over the the star solution  for the dominant case plotted against different values of $ |\alpha_{1}| $ for given $ |\alpha_{2}| $ and $ |\alpha_{3}| $  }
\label{figtd}
\end{figure}
\begin{align}\label{trace difference}
\mathrm{Tr}(\Sigma_{t,ND})-\mathrm{Tr}(\Sigma_{t,DM})&=|\alpha_{1}|\left(|\alpha_{1}|-2\sum_{i=2}^{n}|\alpha_{i}|\right)\notag\\
&+\left(\sum_{i=2}^{n}|\alpha_{i}|\right)^{2}
\end{align}
where the function $ \mathrm{Tr}(.) $ denotes the trace of a matrix. Without the loss of generality we can vary the dominance of $ \vec{\alpha} $ by increasing $ |\alpha_{1}|$ while keeping $ |\alpha_{j}|, 2\leq j \leq n $ unchanged. Under such settings the terms $ 2\sum_{i=2}^{n}|\alpha_{i}| $ and $ \left(\sum_{i=2}^{n}|\alpha_{i}|\right)^{2} $ are constants, which in turn suggests from equation \eqref{trace difference} that, $ \mathrm{Tr}(\Sigma_{t,ND})-\mathrm{Tr}(\Sigma_{t,DM}) $ is an increasing function of $ |\alpha_{1}| $ as shown in Figure \ref{figtd}. Since $ |\alpha_{1}|<1 $, the term $ \mathrm{Tr}(\Sigma_{t,ND})-\mathrm{Tr}(\Sigma_{t,DM}) $ must be upperbounded by $ 1-2\sum_{i=2}^{n}|\alpha_{i}|
+\left(\sum_{i=2}^{n}|\alpha_{i}|\right)^{2} $.

\section{Building a Gaussian Tree}
Though the major focus of this work is just one node of the tree, in this section we present some of our findings towards combining multiple clusters to build a Gaussian tree. 

We are assuming  that we have a $\Sigma_{x}$ generated with a latent Gaussian tree topology constrainst and these Gaussian observable variables are further divided into $m$ clusters. The $i$th cluster has $n_{i}$ observables i.e. $X_{i,1}, \cdots, X_{i, n_{i}}$. The individual edge weights of the  observable variables $ \alpha_{j}$  are iid uniformly over $ (-1,1) $ or equivalently  $|\alpha_i|$ are iid uniformly over $ (0,1) $ and the correlation between any two observable $X_i$ and $X_j$ is the product of the respective edge weights of the path connecting the two observed variables in the latent Gaussian tree \cite{moharrer2017information}. 

Under the above settings, the problem at our hand can be stated as follows: we have $m$ randomly generated clusters equipped with a Gaussian latent tree structural interpretation and we require the probability of all of them non-dominant in order to keep the $m$ local clusters recontructable at least equal to $\delta$. 

The next Lemma will help us prove the Theorem to follow. 

\begin{lem}\label{lemma}
The probability that a particular cluster with $ n $ observables is non-dominant or equivalently the vector $ \vec{\alpha}=[\alpha_{1}, \alpha_{2}, \dots \alpha_{n}]'  $ is non-dominant, is $1- \frac{1}{n!} $. 
\end{lem}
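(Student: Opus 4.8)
The plan is to pass from the signed weights to their magnitudes $V_i := |\alpha_i|$, which by assumption are i.i.d.\ uniform on $(0,1)$, and to compute the probability of the complementary event, dominance. By the definition of dominance (with the coordinates ordered so that $V_1 = \max_i V_i$), the vector is dominant precisely when $V_1 > \sum_{i=2}^{n} V_i$; moreover this inequality already forces $V_1$ to be the largest magnitude, so no generality is lost in simply studying the event $E = \{ V_1 > V_2 + \dots + V_n \}$ for i.i.d.\ uniform $V_1, \dots, V_n$. It therefore suffices to show $\Pr(E) = \frac{1}{n!}$, after which the Lemma follows by taking the complement.

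To evaluate $\Pr(E)$ I would condition on the value $t \in (0,1)$ of the first coordinate and use independence, so that $\Pr(E) = \int_0^1 \Pr(V_2 + \dots + V_n < t)\, dt$. The key point is that for $t \le 1$ the unit-interval constraints $V_i < 1$, $i = 2, \dots, n$, are automatically satisfied on the event $\{V_2 + \dots + V_n < t\}$, so $\Pr(V_2 + \dots + V_n < t)$ equals the $(n-1)$-dimensional volume of the scaled simplex $\{ y \in (0,1)^{n-1} : y_1 + \dots + y_{n-1} < t \}$, namely $\frac{t^{\,n-1}}{(n-1)!}$. Substituting and integrating gives $\Pr(E) = \int_0^1 \frac{t^{\,n-1}}{(n-1)!}\, dt = \frac{1}{n!}$, hence $\Pr(\vec{\alpha}\ \text{non-dominant}) = 1 - \frac{1}{n!}$.

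The only substantive ingredient is the simplex-volume identity $\mathrm{vol}\{ y \in (0,\infty)^{k} : y_1 + \dots + y_k < t \} = \frac{t^{k}}{k!}$, which I would prove by a short induction on $k$: integrating the $(k-1)$-dimensional statement over the last coordinate gives $\int_0^t \frac{s^{\,k-1}}{(k-1)!}\, ds = \frac{t^{k}}{k!}$. Combined with the easy but essential observation that the cube truncation is inactive because $V_1 < 1$, this closes the argument; everything else is routine integration. I expect the only real obstacle to be expository rather than mathematical: stating the reduction carefully enough that the reader sees why the sorted-vector definition of dominance collapses to the single inequality $V_1 > \sum_{i \ge 2} V_i$ and hence to a coordinate-wise volume computation.
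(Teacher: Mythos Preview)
Your integral computation is exactly what the paper does: the paper obtains the density $f_{S_i}(t)=t^{n-2}/(n-2)!$ on $(0,1)$ by iterated convolution of uniforms and then integrates twice, which is the same object as your simplex-volume formula $\Pr(V_2+\cdots+V_n<t)=t^{n-1}/(n-1)!$ proved by induction. So the analytic core of the two arguments coincides.

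The gap is in your reduction. You correctly note that the inequality $V_1>\sum_{i\ge2}V_i$ forces $V_1$ to be the maximum, but this does \emph{not} let you identify the event ``the ordered vector is dominant'' with the event $E=\{V_1>\sum_{i\ge2}V_i\}$ computed for an \emph{unordered} i.i.d.\ sample. If $U_1,\dots,U_n$ are the i.i.d.\ uniforms, then dominance is the disjoint union $\bigcup_{k=1}^n\{U_k>\sum_{j\ne k}U_j\}$, so its probability equals $n\cdot\Pr\bigl(U_1>\sum_{j\ge2}U_j\bigr)=n/n!=1/(n-1)!$, not $1/n!$; your ``no generality is lost'' sentence silently drops this factor of~$n$. (Sanity check: for $n=2$ dominance holds almost surely, and for $n=3$ a direct volume computation gives probability $1/2$, whereas $1/n!$ predicts $1/2$ and $1/6$.) The paper's proof diverges from yours precisely at this step: it writes $\Pr(\text{dominant})=\frac{1}{n}\sum_{i=1}^n\Pr(\alpha_i\text{ is the dominant element})$ and then evaluates each summand as the unconditional probability $\Pr(|\alpha_i|>S_i)=1/n!$. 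That prefactor $1/n$ is what produces the stated answer, but it is not itself justified either, since the events ``$\alpha_i$ is the dominant element'' are disjoint and should simply be summed. In short, your proof and the paper's share the same calculus and the same missing factor of $n$: you lose it in the reduction from order statistics to i.i.d.\ coordinates, while the paper cancels it with an unexplained $1/n$.
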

The Lemma essentially  gives the probability with which a cluster is non-dominant. It is trivial to show that multiple clusters can be combined to build a Gaussian tree when each of the clusters has only one latent variable i.e. each cluster is non dominant. 

Before we embark on the proof of the Lemma it should be noted that, though for the ease of analysis we assumed by equation \eqref{order_alpha} that $ |\alpha_{1}| $ is has the biggest absolute value of all the elements and hence only $ \alpha_{1} $ can be the dominant element, but in practice each element in $ \vec{\alpha} $ has equal probability to be the biggest value. So, to determine the probability with which $ \vec{\alpha} $ is dominant, we must consider the fact that each element has equal probability to be the dominant element. 

\begin{proof}
Let, $ S_{i} $ be defined as the sum of the absolute values of all the elements of vector $ \vec{\alpha} $ except the $i $th element as given by equation \eqref{si}. 
\begin{align}\label{si}
S_{i}=\sum_{j\neq i}|\alpha_{j}|, \quad 1 \leq i \leq n
\end{align}
 Before we embark on the main part of the proof, we need to find the probability density function of  $ S_{i} $ i.e. $ f_{S_{i}}(t) $  for  $ 0<t<1 $. Since We know that each $|\alpha_{i}|, 1\leq i \leq n$ has a probability density function of  $\mathcal{U}(0,1)  $, hence if $ n=3 $ then $ f_{S_{i}}(t) $ should be the convolution between two uniform density functions i.e. convolution between  $\mathcal{U}(0,1)$ and $\mathcal{U}(0,1)$, which would result in $  f_{S_{i}}(t) =t, \quad 0<t<1 $. If we now add one more variable i.e. make $ n=4 $, then the $ f_{S_{i}}(t) $ that we got from the $ 3 $ variable case will have to be convoluted with another $\mathcal{U}(0,1)$ to give us  the new $ f_{S_{i}}(t) $ for $ 4 $ variable case, which will result in $  f_{S_{i}}(t) =\frac{t^{2}}{2}, \quad 0<t<1 $. It is straightforward to see that, following similar steps will result in $ f_{S_{i}}(t)=\frac{t^{n-2}}{(n-2)!}, \quad 0<t<1 $ for $ n $ varialble case i.e. $ \vec{\alpha} $ being a column vector of dimension $ n $. 
 
 Let $h=|\alpha_{1}|$, then we have
\begin{align}
P(\vec{\alpha} \quad \text{is dominant})&=\sum_{i=1}^{n}\frac{1}{n}P(\textit{$ \alpha_{i} $ is the dominant element})\notag\\
&= \frac{1}{n}\sum_{i=1}^{n} \int_{0}^{1} P[(S_{i}<h) \mid \alpha_{i}] \quad dh\notag\\
&= \frac{1}{n}\sum_{i=1}^{n} \int_{0}^{1} \left[\int_{0}^{h}f_{S_{i}}(t) \quad  dt\right] \quad dh\notag\\
&= \frac{1}{n}\sum_{i=1}^{n} \int_{0}^{1} \left[\int_{0}^{h}\frac{t^{n-2}}{(n-2)!} \quad  dt\right] \quad dh\notag
\end{align}
\begin{align}
&= \frac{1}{n}\sum_{i=1}^{n} \int_{0}^{1} \left[\frac{h^{n-1}}{(n-1)!} \right] \quad dh\notag\\
&= \frac{1}{n}\sum_{i=1}^{n}  \left[\frac{1^{n}}{n!} \right]\notag\\
&= \frac{1}{n}n  \left[\frac{1}{n!} \right]\notag\\
&= \frac{1}{n!}\notag
\end{align}
So, the probability that $ \vec{\alpha} $ is non- dominant is
\begin{align}
P(\vec{\alpha} \quad \text{is non-dominant})&=1-P(\vec{\alpha} \quad \text{is dominant})\notag\\
&=1-\frac{1}{n!}\notag
\end{align}
The next Theorem has the main outcome of the section.
\end{proof} 
 \begin{thm}\label{theorem}
Equations \eqref{nc} and \eqref{sc} are respectively necessary and sufficient conditions for all of the $m$ randomly generated clusters, equipped with a Gaussian latent tree structural interpretation, to be non-dominant with a   probability at least equal to $\delta$.
\begin{align}\label{nc}
\frac{1}{m}\sum_{i=1}^{m}n_{i}!\geq\frac{1}{1-\delta^{\frac{1}{m}}}
\end{align}
\begin{align}\label{sc}
n_{min}!\geq \frac{1}{1-\delta^{\frac{1}{m}}}
\end{align}
where $n_{min}!=\min_{i}\{n_{1}!, \dots, n_{m}!\}$.
\end{thm}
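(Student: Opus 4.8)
The plan is to first convert the probabilistic requirement into a deterministic inequality on the factorials $n_1!,\dots,n_m!$, and then treat the necessary and the sufficient directions with two different applications of elementary mean inequalities. Because the edge weights $\alpha_j$ are drawn i.i.d.\ across all observables, the $m$ clusters are generated independently of one another; hence, by Lemma \ref{lemma}, the probability that \emph{every} cluster is non-dominant is the product $\prod_{i=1}^{m}\left(1-\frac{1}{n_i!}\right)$. Therefore the condition that all $m$ clusters be non-dominant with probability at least $\delta$ is precisely $\prod_{i=1}^{m}\left(1-\frac{1}{n_i!}\right)\geq\delta$, and it suffices to show that this inequality implies \eqref{nc} and is implied by \eqref{sc}.

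For the necessary direction, I would apply the arithmetic--geometric mean inequality to the $m$ nonnegative numbers $1-\frac{1}{n_i!}$: from $\delta\leq\prod_{i=1}^{m}\left(1-\frac{1}{n_i!}\right)\leq\left(\frac{1}{m}\sum_{i=1}^{m}\left(1-\frac{1}{n_i!}\right)\right)^{m}$ one obtains $\frac{1}{m}\sum_{i=1}^{m}\frac{1}{n_i!}\leq 1-\delta^{\frac{1}{m}}$. Then, invoking Jensen's inequality for the convex function $x\mapsto 1/x$, $\frac{1}{m}\sum_{i=1}^{m}\frac{1}{n_i!}\geq\left(\frac{1}{m}\sum_{i=1}^{m}n_i!\right)^{-1}$. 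Chaining the two bounds gives $\left(\frac{1}{m}\sum_{i=1}^{m}n_i!\right)^{-1}\leq 1-\delta^{\frac{1}{m}}$, which is exactly \eqref{nc}.

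For the sufficient direction, I would use a uniform per-factor bound. If \eqref{sc} holds, then $n_i!\geq n_{min}!\geq\frac{1}{1-\delta^{\frac{1}{m}}}$ for every $i$, so each factor satisfies $1-\frac{1}{n_i!}\geq\delta^{\frac{1}{m}}$; multiplying the $m$ inequalities gives $\prod_{i=1}^{m}\left(1-\frac{1}{n_i!}\right)\geq\delta$, i.e.\ all clusters are non-dominant with probability at least $\delta$.

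I do not anticipate a genuine obstacle; the points that need care are (i) justifying that the clusters are independent so that the joint probability factorizes, and (ii) tracking the directions of the two mean inequalities---AM--GM is applied to the ``success'' factors $1-\frac{1}{n_i!}$ for the necessary part, whereas convexity of $1/x$ is what converts an arithmetic mean of the $\frac{1}{n_i!}$ into a statement about the arithmetic mean of the $n_i!$. It is also worth noting that \eqref{sc} is in general strictly stronger than \eqref{nc}, since $n_{min}!\leq\frac{1}{m}\sum_{i=1}^{m}n_i!$, which is why one condition is merely necessary and the other merely sufficient rather than the two being equivalent.
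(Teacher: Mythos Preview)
Your proposal is correct and follows essentially the same route as the paper. The paper sets up the same product $\prod_{i=1}^{m}\bigl(1-\tfrac{1}{n_i!}\bigr)\ge\delta$, then applies Jensen's inequality for the concave $\ln$ (your AM--GM step) followed by Jensen's inequality for the concave $f(x)=1-\tfrac{1}{x}$ (equivalent to your Jensen step for the convex $x\mapsto 1/x$) to obtain \eqref{nc}; the sufficient direction via the uniform lower bound $n_i!\ge n_{\min}!$ is identical in both.
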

The statement of the theorem is intuitively satisfying, because any increament in the value of $ m $ and decreament in the value of $ n $ will impact the chance of having all the clusters non-dominant negatively and vice versa. The necessary condition given by \eqref{nc} perfectly captures such tussle. And for the sufficient condition, if the minimum value in a set of positive numbers satisfies a certain lowerbound, then the other values in the set will automatically satisfy that lowerbound. 
\begin{proof}
Let the function $ f(n_{i}!) $ denote the probability that the $ i $th cluster is non-dominant. We have to satisfy the minimum probability $ \delta $ that all of the $ m $ clusters together form a Gaussian tree, which is possible if and only if all the clusters are non-dominant. Mathematically speaking, since the clusters are independent of each other, we have to satisfy the following equation.
\begin{align}\label{req}
\pi_{i=1}^{m}f(n_{i}!)\geq \delta
\end{align} 
or equivalently,
\begin{align}\label{ineq}
\frac{1}{m}\sum_{i=1}^{m}\ln f(n_{i}!)\geq \frac{\ln \delta}{m}
\end{align} 
Now, because of the concavity of the $ \ln(.) $ function, the following must be true,
\begin{align}\label{ineq1}
\frac{1}{m}\sum_{i=1}^{m}\ln f(n_{i}!)\leq \ln \left[ \frac{ \sum_{i=1}^{m} f(n_{i}!)}{m}\right]
\end{align}
We define, $ x=n_{i}! $, and using Lemma \ref{lemma} gives us $ f(x)=1-\frac{1}{x} $ and in turn we get a negative value for the double derivative of $ f(x) $, $f"(x)=-\frac{2}{x^{3}} $. Using such concavity of the function $ f(.) $ and the inequality given by \eqref{ineq1}, we get
\begin{align}\label{ineq2}
\frac{1}{m}\sum_{i=1}^{m}\ln f(n_{i}!)\leq \ln \left[ f \left(\frac{ \sum_{i=1}^{m}n_{i}!}{m}\right)\right]
\end{align}
Combining, \eqref{ineq} and \eqref{ineq2} we get,
\begin{align}
&\ln \left[ f \left(\frac{ \sum_{i=1}^{m}n_{i}!}{m}\right)\right]\geq \frac{1}{m}\ln \delta\notag\\
\Longleftrightarrow & f \left(\frac{ \sum_{i=1}^{m}n_{i}!}{m}\right)\geq \delta^{\frac{1}{m}}\notag\\
\Longleftrightarrow & 1-\frac{1}{\frac{1}{m}\sum_{i=1}^{m}n_{i}!}\geq \delta^{\frac{1}{m}} \quad \text{[using Lemma \ref{lemma}]}\notag\\
\Longleftrightarrow & \frac{1}{m}\sum_{i=1}^{m}n_{i}!\geq \frac{1}{1-\delta^{\frac{1}{m}}}
\end{align}
That proves the necessary condition. Now to prove the sufficient condition, showing that when \eqref{sc} is true \eqref{req} holds, will suffice. 

\begin{align}
&n_{min}!\geq \frac{1}{1-\delta^{\frac{1}{m}}}\notag\\
\Rightarrow & 1-\frac{1}{n_{min}!}\geq \delta^{\frac{1}{m}} \notag\\
\Rightarrow &f\left(n_{min}!\right)\geq \delta^{\frac{1}{m}}\notag\\
\Rightarrow & \ln \left[f\left(n_{min}!\right) \right] \geq  \frac{1}{m} \ln \delta \notag\\
\Rightarrow & m \ln \left[f\left(n_{min}!\right) \right] \geq \ln \delta \notag\\
\Rightarrow & \sum_{i=1}^{m} \ln \left[f\left(n_{min}!\right) \right] \geq \ln \delta \notag
\end{align}
\begin{align}
\Rightarrow &  \ln \left[ \pi_{i=1}^{m} f\left(n_{min}!\right) \right] \geq \ln \delta \notag\\
\Rightarrow &   \pi_{i=1}^{m} f\left(n_{min}!\right)  \geq  \delta \notag\\
\Rightarrow &   \pi_{i=1}^{m} f\left(n_{i}!\right)  \geq  \delta \notag
\end{align}
\end{proof}
 \section{Conclusion}
In this paper we analyzed the solution space of   convex optimization problem CMTFA when the matrix $ \Sigma_{x} $ has a given latent-star interpretation. In particular we proved that the CMTFA solution of $ \Sigma_{x} $ is either rank $ 1 $ or rank $ n-1 $ and nothing in between. We characterized the closed form and found explicit sufficient and necessary conditions for both the solutions.  As a basic attempt towards building a more general Gaussian tree, we also found a necessary and a sufficient condition for multiple clusters having rank $ 1 $ CMTFA solution, to satisfy a minimum probability to  combine together to build a Gaussian tree.

\bibliography{arxiv_ref}
\bibliographystyle{IEEEtran}
\end{document}